\documentclass[floatfix,twocolumn,pra,superscriptaddress,notitlepage,floatfix]{revtex4-2}
% Basic document support
\usepackage[T1]{fontenc}
\usepackage[utf8]{inputenc}
\usepackage[english]{babel}

% Color and graphics
\usepackage[dvipsnames]{xcolor}
\usepackage{graphicx}
\usepackage{float}
\usepackage{tikz}
\usepackage{svg}
\usepackage{epstopdf}

% Math packages
\usepackage{amsmath}
\usepackage{amsthm}
\usepackage{amsfonts}
\usepackage{amssymb}
\usepackage{amsbsy}
\usepackage{mathtools}
\usepackage{bm}
\usepackage{bbm}
\usepackage{physics}
\usepackage{mathrsfs}
\usepackage{dsfont}

% Table and alignment
\usepackage{dcolumn}
\usepackage{makecell}
\usepackage{multirow}
\usepackage{longtable}

% Algorithms
\usepackage[ruled,linesnumbered]{algorithm2e}
\usepackage{algpseudocode}

% Other functionality
\usepackage{enumerate}
\usepackage{framed}
\usepackage{comment}
\usepackage{qcircuit}
\usepackage{faktor}
\usepackage{changepage}

% Hyperref and related (load last)
\usepackage[colorlinks = true]{hyperref}
\hypersetup{colorlinks=true, linkcolor=blue, citecolor=blue, urlcolor=black}
\usepackage[capitalise]{cleveref}

\makeatletter
\newtheorem*{rep@theorem}{\rep@title}
\newcommand{\newreptheorem}[2]{%
\newenvironment{rep#1}[1]{%
 \def\rep@title{#2 \ref{##1}}%
 \begin{rep@theorem}}%
 {\end{rep@theorem}}}
\makeatother

\newtheorem{theorem}{Theorem}
\newtheorem{proposition}{Proposition}

\newtheorem{lemma}{Lemma}

\newtheorem*{assumption*}{Assumption}
\newtheorem{corollary}{Corollary}
\newtheorem{definition}{Definition}

\newcommand{\p}{\textsf{pauli}}

\newcommand{\ii}{\textup{i}}

\newcommand{\mc}[1]{\mathcal{#1}}

\begin{document}
\title{Exponentially reduced circuit depths in Lindbladian simulation}
\begin{abstract}
\end{abstract}
\date{\today}
\author{Wenjun Yu}
\affiliation{QICI Quantum Information and Computation Initiative, School of Computing and Data Science, The University of Hong Kong, Pokfulam Road, Hong Kong}

\author{Xiaogang Li}

\affiliation{Center on Frontiers of Computing Studies,  Peking University, Beijing 100871, China}
\affiliation{School of Computer Science, Peking University, Beijing 100871, China}

\author{Qi Zhao}
\email[]{zhaoqi@cs.hku.hk}
\affiliation{QICI Quantum Information and Computation Initiative, School of Computing and Data Science, The University of Hong Kong, Pokfulam Road, Hong Kong}
\author{Xiao Yuan}
\email{xiaoyuan@pku.edu.cn}

\affiliation{Center on Frontiers of Computing Studies,  Peking University, Beijing 100871, China}
\affiliation{School of Computer Science, Peking University, Beijing 100871, China}
\begin{abstract}

Quantum computers can efficiently simulate Lindbladian dynamics, enabling powerful applications in open system simulation, thermal and ground-state preparation, autonomous quantum error correction, dissipative engineering, and more. 
Despite the abundance of well-established algorithms for closed-system dynamics, simulating open quantum systems on digital quantum computers remains challenging due to the intrinsic requirement for non-unitary operations.
Existing methods face a critical trade-off: either relying on resource-intensive multi-qubit operations with experimentally challenging approaches or employing deep quantum circuits to suppress simulation errors using experimentally friendly methods.
In this work, we challenge this perceived trade-off by proposing an efficient Lindbladian simulation framework that minimizes circuit depths while remaining experimentally accessible. 
Based on the incoherent linear combination of superoperators, our method achieves exponential reductions in circuit depth using at most two ancilla qubits and the straightforward Trotter decomposition of the process. 
Furthermore, our approach extends to simulate time-dependent Lindbladian dynamics, achieving logarithmic dependence on the inverse accuracy for the first time.
Rigorous numerical simulations demonstrate clear advantages of our method over existing techniques. This work provides a practical and scalable solution for simulating open quantum systems on quantum devices.
\end{abstract}
\maketitle

\paragraph*{Introduction.---}
Quantum simulation of open quantum systems is one of the most promising applications of quantum computing.
To effectively describe these dynamic processes, researchers frequently employ the Lindblad master equation, which models large environments as memoryless baths under the Markovian approximation~\cite{lindblad1976generators,gorini1976completely}.
This framework provides sufficient generality to capture a wide range of open-system phenomena across various fields such as quantum optics~\cite{gardiner2004quantum,manzano2016atomic}, quantum biology~\cite{dorner2012towards,manzano2013quantum,plenio2008dephasing,mohseni2008environment}, and quantum chemistry~\cite{May_Volkhard2023-06-06,nitzan2006chemical}.
Beyond its role in scientific inquiry, Lindbladian simulation has proven transformative for quantum computing tasks, including thermal and ground-state preparation~\cite{cattaneo2023quantum,rost2021demonstrating,ding2024single,cubitt2023dissipative}, dissipative quantum state engineering~\cite{verstraete2009quantum}, autonomous quantum error correction~\cite{Reiter2017,Xu2023,PhysRevLett.128.020403}, and the solution of differential equations~\cite{shang2024design}.
The potential to simulate these complex dynamics positions Lindbladian simulations at the forefront of efforts to leverage quantum computing for advancing both foundational science and practical quantum technologies.

Despite its broad applicability, implementing Lindblad simulation poses significant technical challenges due to the fundamental difficulty of performing non-unitary evolution within the inherently unitary framework of quantum circuits.
Current approaches are generally divided into two categories, each with distinct limitations.
The first category comprises advanced-resource methods, which utilize linear combinations of unitaries to achieve polylogarithmic circuit depths~\cite{cleve2016efficient,li2022simulating,chen2023quantum,peng2024quantum}.
While these methods are theoretically efficient, they rely on sophisticated multi-qubit gates that exceed the capabilities of current hardware.
The second category includes methods that prioritize hardware accessibility, leveraging tools such as dilated Hamiltonians~\cite{childs2016efficient,cattaneo2021collision,pocrnic2023quantum,ding2024simulating} or Kraus expansions~\cite{schlimgen2021quantum,schlimgen2022quantum}. 
These approaches rely on more experimentally feasible quantum operations but are suboptimal in theory, necessitating polynomial circuit depth to suppress simulation errors.
This trade-off between implementation accessibility and circuit efficiency plausibly reveals the intrinsic difficulty of simulating system-environment coupling, a challenge that has long been considered fundamental to open quantum system dynamics.

\begin{figure}[t]
    \centering
    \includegraphics[width=\columnwidth]{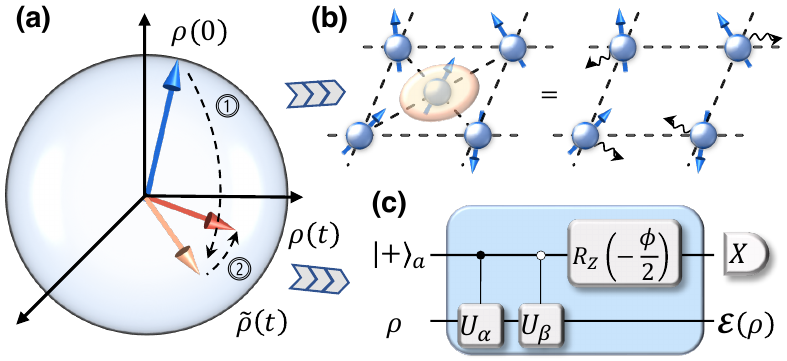}
    \caption{Schematic representation of our two-stage simulation framework. 
    (a) The hierarchical approach combines a coarse-grained simulation (Stage 1) with precision-enhancing compensation (Stage 2).
    The first stage constitutes the dominant part of the simulation, making the overall algorithm experimental-friendly.
     (b) Implementation of non-unitary dynamics in Stage 1, utilizing system-ancilla coupling for dissipative evolution in a dilated Hilbert space. 
     (c) Core circuit architecture for Stage 2, implementing unitary conjugates to achieve high-precision compensation.}
    \label{fig:concept}
\end{figure}

Here, we demonstrate that this trade-off can be overcome by proposing an efficient simulation protocol that achieves both polylogarithmic depths and minimal resource requirements, using no more than two ancilla qubits.
Our approach, depicted in Fig.~\ref{fig:concept}, employs a two-stage strategy: an accessible coarse-grained simulation followed by systematic compensation.
The coarse-grained stage accomplishes the primary part of our simulation,  leaving only small residual errors to be addressed.
Therefore, the compensation only needs to correct high-order discrepancies, ensuring both computational efficiency and high precision.
The key innovation lies in our linear combination of superoperators framework (LCS) to achieve compensation.
The LCS framework provides a comprehensive representation through a unitary-conjugate basis for all Hermitian-preserving linear maps, including operations for compensating the simulation errors.
Implementation is achieved through the circuit shown in Fig.~\ref{fig:concept}(c), which realizes the unitary-conjugate LCS formulas with minimal overhead.

We further extend our protocol to fulfill time-dependent Lindblad simulations.
Our method first discretizes the generic Lindbladians into piecewise constant segments.
This new evolution can be approximated by a time-independent simulation according to the preceding protocol.
Our LCS framework facilitates the further elimination of the remaining approximation error.
Therefore, we construct a time-dependent Lindbladian simulation protocol with exponentially improved accuracy for the first time.\\

\paragraph*{Settings.---}
The dynamics of open quantum systems, which interact with their environment, is described by the Lindblad master equation:
\begin{gather}
    \frac{d\rho}{dt}=\underbrace{\vphantom{\left(D_l \rho D_l^{\dagger}-\frac{1}{2}\{D^\dag_lD_l,\rho\}\right)}-\ii[H,\rho]}_{\mc H[\rho]}+\sum_{l=1}^m\underbrace{\left(D_l\rho D_l^\dag+\frac{1}{2}\{D^\dag_lD_l,\rho\}\right)}_{\mc D_l[\rho]},
\end{gather}
where $H$ represents the system's internal dynamics, and jump operators $\{D_l\}$ characterize how the system interacts with its environment.
The evolution can be expressed through a superoperator $\mc L\coloneqq\mc H+\sum_{l=1}^m\mc D_l$, which may exhibit continuous time dependence.

The challenge of Lindbladian simulation is to compute how an initial state $\rho(0)$ evolves under this dynamics to reach $\rho(T)=\mc T\exp(\int_0^T \mc L dt) \rho(0)$.
For most practical applications, it suffices to extract measurement outcomes of the final state rather than explicitly constructing it---an approach known as \emph{effective simulation}.
Our work presents an efficient way to effectively simulate the Lindbladian evolution.
For practical implementations, we focus on systems where both $H$ and $\{D_l\}$ are $q$-sparse, meaning they can be expressed using at most $q$ non-zero Pauli operators.
To quantify the evolution's strength, we define the Lindbladian norm as $\|\mc L\|_\p\coloneqq2(\|H\|_1+\sum_{l=1}^m\|D_l\|_1^2)$, where $\|\cdot\|_p$ measures the $p$th order vector norm of the operator's Pauli-decomposition coefficients.
\\

\paragraph*{LCS Formula.---}
Before presenting the algorithm, we first introduce the linear combination of superoperators (LCS) formula, which provides a powerful framework for analyzing and implementing a broad class of superoperators beyond physical operations.
Specifically, a LCS formula of a superoperator $\mc V$ is a convex combination of $\mc{V}_i$ with positive coefficients $c_i$ as
\begin{equation} \label{eq:LCS}
\mc V = \sum_{i=0}^{\Gamma-1} c_i \mc{V}_i = \mu \sum_{i=0}^{\Gamma-1} \Pr(i) \mc{V}_i,
\end{equation}
where $\mu=\sum_{i=0}^{\Gamma-1} c_i$, the normalization factor, is defined as the 1-norm and $\Pr(i)=c_i/\mu$.
The convex combination structure of the LCS formula in Eq.~\eqref{eq:LCS} naturally lends itself to implementation via Monte Carlo sampling.
This approach involves applying superoperator $\mc V_i$ to the quantum state with probability $\Pr(i)$, followed by a renormalization $\mu$ of the output state. The renormalization $\mu$ introduces additional sample overhead by amplifying statistical fluctuations to $\epsilon_s=\order{\mu/\sqrt{N}}$ in contrast
to $\order{1/\sqrt{N}}$ with $N$ rounds. 
For small constant $\mu$, however, the average of the quantum states still rapidly converges to the result of applying the desired superoperator. 
For tasks requiring the sequential application of multiple superoperators, we can concatenate their respective LCS formulae. This composition yields a new LCS formula that targets the overall superoperator.

While more powerful basis choices ${\mc V_i}$ can enhance performance, we prioritize experimental feasibility by utilizing minimal resources. 
To this end, we select \emph{unitary-conjugate} operations as our basis, defined as:
\begin{gather}\label{eq:UC}
    \mc V_{\phi_{\alpha\beta},\alpha,\beta}[\rho]\coloneqq\frac{1}{2}(\mathrm{e}^{\ii\phi_{\alpha\beta}} U_\alpha\rho U_\beta^\dag+\mathrm{e}^{-\ii\phi_{\alpha\beta}} U_\beta\rho U_\alpha^\dag),
\end{gather}
where $U_\alpha,U_\beta$ are unitary operators, and $\phi_{\alpha\beta}=-\phi_{\beta\alpha}$.
These operations can be efficiently implemented using the circuit structure depicted in Fig.~\ref{fig:concept} (c), requiring only a single ancilla qubit.
Moreover, we find this basis can express any Hermitian-preserving (HP) maps (Pauli-conjugate as a valid basis).
The HP assumption further implies $\Pr(\alpha,\beta)=\Pr(\beta,\alpha)$ as in~\cite{de1967linear,poluikis1981completely}.
When considering the concatenation of a series of LCS formulas, we show in the appendix that ancilla qubits can be reused without intermediate measurements.
This optimization reduces the quantum resource requirements while maintaining the desired operational sequence, enhancing the efficiency of our approach for complex quantum operations.\\

\begin{figure}[t]
    \centering
    \includegraphics[width=\columnwidth]{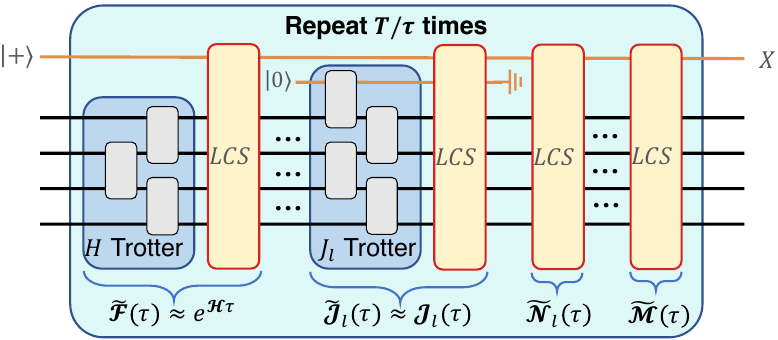}
    \caption{Circuit implementation diagram for a single simulation step. Blue blocks represent the Trotter-based simulation in the coarse-grained stage, while yellow blocks indicate error compensation operations. 
    The implementation requires at most two ancilla qubits, one of which can be reused for LCS formula execution as detailed in Appendix~\ref{sec:reuse}.}
    \label{fig:circuit}
\end{figure}

\paragraph*{Algorithm Overview.---}
We discretize the evolution time  $T$  into small intervals  $\tau \ll 1$ to control simulation errors.
For each time step, our approach to time-independent Lindblad adopts a hierarchical protocol: the coarse-grained simulation stage followed by the systematic error compensation stage, as shown in Fig.~\ref{fig:concept}.

The coarse-grained stage builds upon the conventional Lie-Trotter decomposition
\begin{gather}\label{eq:Lind_de}
    \mathrm{e}^{\mc L\tau}=\overleftarrow{\prod_{l=1}^m}\left(\mathrm{e}^{\mc D_l\tau}\right)\circ \mathrm{e}^{\mc H\tau}+\order{\tau^2},
\end{gather}
where the left arrow denotes descending order.
However, the inherent non-unitarity of the dissipative terms $\{\mathrm{e}^{\mc D_l\tau}\}$ still poses challenges to the realization.
To overcome this, we introduce a single-qubit-assisted joint Hamiltonian evolution $\mathrm{e}^{-\ii J_l\sqrt{\tau}}$, where
\begin{equation}
J_l\coloneqq\left(\begin{array}{cccc}
0 & D_l^{\dagger} \\
D_l & 0 
\end{array}\right)=\sigma_-\otimes D_l+\sigma_+\otimes D_l^\dag.
\end{equation}
This operation after tracing out the ancilla, denoted as $\mc J_{l}(\tau)$, approximates $\mathrm{e}^{\mc D_l\tau}$ to the first order:
\begin{equation}\label{eq:J_def}
 \Tr_a(\mathrm{e}^{-\ii J_l\sqrt{\tau}} [\ket{0}\bra{0}_a\otimes \rho] \mathrm{e}^{\ii J_l\sqrt{\tau}} ) = \rho + \tau\cdot\mc D_l[\rho] +\mc O(\tau^2).
\end{equation}
Both $J_l$ and $H$ evolutions are then approximated using the first-order Trotter-Suzuki formula, as shown in the blue blocks of Fig.~\ref{fig:circuit}.

This Trotter-based approach, while experimentally feasible, involves three distinct sources of error including the approximation of $\mc L$ evolution in Eq.~\eqref{eq:Lind_de}, the approximation of $\mc D_l$ evolution in Eq.~\eqref{eq:J_def}, and the Trotter errors in the evolution of $J_l$ and $H$.
Next, we show how to systematically compensate for these errors using the LCS of unitary-conjugate operations with minimal ancilla qubit overhead.
We summarize this process in yellow blocks in Fig.~\ref{fig:circuit}.\\

\paragraph*{Compensation Correction.---}
While first-order Trotter formulas provide a feasible approach for our coarse-grained simulation, a systematic compensation stage is essential to achieve high accuracy.
Here, we illustrate this novel compensation stage using LCS formulas.

To address higher-order errors in Eq.~\eqref{eq:Lind_de}, we desire a compensation $\mc M(\tau)$ that satisfies 
\begin{equation}\label{eq:Lind_decomp}
\begin{aligned}
		\mathrm{e}^{\mc L\tau}=\mc M(\tau) \circ \overleftarrow{\prod_{l=1}^m}\left(\mathrm{e}^{\mc D_l\tau}\right)\circ \mathrm{e}^{\mc H\tau}.
\end{aligned}
\end{equation}
Using exponential function's properties, we can express $\mc M(\tau)$ as $\mathrm{e}^{\mc L\tau} \circ\mathrm{e}^{-\mc H\tau}\circ\overrightarrow{\prod_{l=1}^m} \mathrm{e}^{-\mc D_l\tau}$.
When we expand each exponential term in its Taylor series, we obtain a power series $\mc M(\tau)=\sum_{k=0}^\infty \mc M_k\tau^k$ with
\begin{gather}\label{eq:M_k}
    \mc M_k\coloneqq\sum_{s+\sum_{l=0}^ms_l=k}\frac{(\mc H+\sum_{l=1}^m\mc D_l)^s(-\mc H)^{s_0}\overrightarrow{\prod_{l=1}^m}(-\mc D_l)^{s_l}}{s!\prod_{l=0}^m(s_l!)}.
\end{gather}

Although this compensation operation contains inverse dissipative terms that cannot be directly implemented physically, the LCS formula provides a practical solution.
By decomposing the Hamiltonian $\mc H$ and dissipative terms ${\mc D_l}$ in the Pauli-conjugate basis, we obtain an LCS formula for $\mc M(\tau)$. 
We then truncate $\mc M(\tau)$ to order $K$, yielding a finite-term approximation $\tilde{\mc M}(\tau)=\sum_{k=0}^K\mc M_k\tau^k$, as detailed in Appendix~\ref{sec:M}.
The truncated formula $\tilde{\mc M}(\tau)$ consists of Pauli conjugates with a quadratic 1-norm $\mu(\tilde{\mc M}(\tau))=\order{1+\| \mc L\|_\p^2\tau^2}$ (since $\mc M_1=0$) and an exponentially small bias
\begin{gather}
    \epsilon_M\coloneqq\|\tilde{\mc M}(\tau)-\mc M(\tau)\|_\diamond\leq\left(\frac{4e\|\mc L\|_\p\tau}{K+1}\right)^{K+1}.
\end{gather}

To implement this compensation, we employ the preceding Monte-Carlo sampling approach.
We first select an order $k_0\leq K$ with probability $\mu(\mc M_{k_0})\tau^{k_0}/\mu(\tilde{\mc M}(\tau))$.
We then sample a Paul-conjugate operator according to its weight in $\mc M_{k_0}$, which is calculated through Eq.~\eqref{eq:M_k} and the decomposition of $\mc H$ and $\{\mc D_l\}$.
As shown in Fig.~\ref{fig:concept}(c), the sampled operations can be implemented using two controlled Pauli gates, requiring one ancilla qubit and $\order{K}$ elementary gates for Pauli locality $\order{K}$.
Overall, we effectively suppress the Lie-Trotter error in Eq.~\eqref{eq:Lind_de} by applying $\tilde{\mc M}(\tau)$. 

The method extends similarly to address higher-order discrepancies in both dissipative and Hamiltonian evolutions.
For each case, we derive an explicit compensation operation, truncate to the first $K$th-order terms, and apply the same sampling process to realize the LCS formula.
We delay the construct ideas in Appendix~\ref{sec:N} and \ref{sec:H} for interested readers.
Each compensation similarly requires $\order{K}$ elementary gates and one additional ancilla qubit and brings an exponentially small bias with $K$.
Note that compensation for Hamiltonian evolution of $J_l$ in Eq.~\eqref{eq:J_def} requires two ancilla qubits since $J_l$ is already a $n+1$-qubit Hamiltonian operator.\\

\paragraph*{Simulation Construction and Overheads.---}
We now construct a comprehensive high-accuracy simulation of Lindblad evolution by integrating previously described two-stage processes.
For each step, the simulation takes the form:
\begin{gather}
    \mathrm{e}^{\mc L\tau}\approx\tilde{\mc M}(\tau) \circ \overleftarrow{\prod_{l=1}^m}\left(\tilde{\mc N}_{l}(\tau)\tilde{\mc J}_{l}(\tau)\right)\circ \tilde{\mc F}(\tau),
\end{gather}
where $\tilde{\mc N}_l(\tau)$ compensates for dissipative term errors, while $\tilde{\mc J}_{l}(\tau)$ and $\tilde{\mc F}(\tau)$ represent compensations for Trotter evolutions of $J_l$ and $H$, as illustrated in Fig.~\ref{fig:circuit}.

To achieve high accuracy while maintaining efficiency, we make several key design choices. First, we set the truncation order:
\begin{gather}\label{eq:K}
    K=\Theta\left(\frac{\log(\varepsilon\tau/T)}{\log(\|\mc L\|_\p \tau)}\right)
\end{gather}
to ensure each step's bias remains below $\varepsilon\tau/T$.
Second, we choose $\tau=\order{1/(\|\mc L\|_\p^2 T)}$ to keep the overall 1-norm $\mu=\exp[\order{\|\mc L\|_\p^2\tau T}]$ at $\Theta(1)$ when concatenating $T/\tau$ steps.
Under these parameters, each step requires $\order{m(q+K)}$ elementary gates due to the first-order Trotter's $\order{q}$ gate cost.
These choices lead to our main result (proved in Appendix~\ref{sec:independent}):

\begin{theorem}\label{thm:time-independent}
    For a $q$-sparse Lindblad operator $\mc L$, simulation time $T$, target error $\varepsilon$, the gate and sample complexities for our Lindbladian simulation algorithm are
    \begin{equation}
        \begin{aligned}
            N_g &= \order{(\|\mc L\|_\p T)^2m\left(q+\frac{\log(1/\varepsilon)}{\log(\|\mc L\|_\p T)}\right)},\\
N&=\Theta\left(1/\varepsilon^2\right)
        \end{aligned}
    \end{equation}
\end{theorem}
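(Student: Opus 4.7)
The plan is to combine a per-step diamond-norm bound with a telescoping argument across the $T/\tau$ Lie--Trotter slices, then tune $\tau$ and $K$ so the accumulated bias is $\order{\varepsilon}$ while the 1-norm driving the sample cost stays at $\Theta(1)$.

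First I would fix the single-step compensated map $\tilde{\mc E}(\tau)\coloneqq\tilde{\mc M}(\tau) \circ \overleftarrow{\prod_{l=1}^{m}}(\tilde{\mc N}_l(\tau)\tilde{\mc J}_l(\tau)) \circ \tilde{\mc F}(\tau)$ and bound $\|\tilde{\mc E}(\tau)-\mathrm{e}^{\mc L\tau}\|_\diamond$ by the sum of the four exponentially small truncation biases already established for $\tilde{\mc M},\tilde{\mc N}_l,\tilde{\mc J}_l,\tilde{\mc F}$ (the bound stated in the excerpt and its analogues in Appendices~\ref{sec:N} and \ref{sec:H}). The truncation order in Eq.~\eqref{eq:K} pushes each of these biases below $\varepsilon\tau/T$, and a standard telescoping $\|\prod_i \mc A_i-\prod_i \mc B_i\|_\diamond \leq \sum_i \|\mc A_i-\mc B_i\|_\diamond$ (with 1-norm amplification absorbed) gives a global bias of at most $\varepsilon$. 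In parallel, the 1-norms multiply as $\mu_{\mathrm{tot}}=(1+\order{\|\mc L\|_\p^2\tau^2})^{T/\tau}=\exp[\order{\|\mc L\|_\p^2 \tau T}]$, which collapses to $\Theta(1)$ once $\tau=\Theta(1/(\|\mc L\|_\p^2 T))$. The appearance of a $\tau^2$ (rather than $\tau$) leading correction, ultimately because $\mc M_1=0$ as noted after Eq.~\eqref{eq:M_k}, is what makes this choice of $\tau$ compatible with running $T/\tau$ steps.

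Under these parameters, each step uses $\order{mq}$ gates for the Trotterised coarse-grained block plus $\order{mK}$ for the four sampled Pauli-conjugate compensations, so multiplying $\order{m(q+K)}$ per step by $T/\tau=\Theta((\|\mc L\|_\p T)^2)$ and substituting $K=\Theta(\log(1/\varepsilon)/\log(\|\mc L\|_\p T))$ (the dominant regime of Eq.~\eqref{eq:K}) yields the claimed $N_g$. The sample count $N=\Theta(1/\varepsilon^2)$ then follows from a Hoeffding/Chebyshev bound on the random-sign Monte Carlo estimator whose variance scales as $\mu_{\mathrm{tot}}^2=\Theta(1)$. The main obstacle I anticipate is the telescoping itself: each $\tilde{\mc E}(\tau)$ is a signed combination of channels rather than a CPTP map, so naive sub-multiplicativity of the diamond norm has to be replaced by a careful bookkeeping argument that couples the per-step bias estimate to the aggregated 1-norm at every depth. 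It is precisely the bound $\mu_{\mathrm{tot}}=\Theta(1)$ secured above that prevents the accumulated bias from blowing up exponentially in $T/\tau$, so the delicate point is confirming this coupling holds uniformly across all slices.
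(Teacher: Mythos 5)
Your proposal is correct and follows essentially the same route as the paper's proof: per-step bias from the four truncation errors, multiplicative 1-norm accumulation collapsed to $\Theta(1)$ by $\tau=\Theta(1/(\|\mc L\|_\p^2 T))$, the choice of $K$ from Eq.~\eqref{eq:K}, the $\order{m(q+K)}$ per-step gate count, and a Hoeffding bound for $N$. The ``delicate telescoping'' point you flag is precisely what the paper resolves with its composition lemma for LCS formulas (Lemma~\ref{lm:concate}), which exploits that each step's \emph{target} is CPTP so the per-step biases combine as $(1+\epsilon_\tau)^{T/\tau}-1$ with only the $\Theta(1)$ overall 1-norm as amplification.
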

\noindent Therefore, without the sophisticated or expensive constructions in Refs.~\cite{cleve2016efficient,li2022simulating,chen2023quantum,peng2024quantum}, we also achieved logarithmic dependence on $1/\varepsilon$ for open system simulation under the experimentally more friendly Trotter framework. \\

\paragraph*{Time-dependent Lindblad Simulation.---}
We can extend our approach with compensation to time-dependent Lindblad simulation.
For each time step $[t_0, t_0+\tau)$, we construct a piecewise constant Lindbladian $\mc L_M(t)$ by partitioning the step into $M$ equal slices.
Within a slice $j$, $\mc L_M(t)$ takes the value $\mc L_j$, evaluated at the slice's midpoint.
When $\mc L(t)$ has a bounded derivative, $\|\dot{\mc L}\|_\diamond\coloneqq\max_t\|\dot{\mc L}(t)\|_\diamond$, this new evolution approximates:
\begin{align}\label{eq:discrete_error}
\|\mc T\mathrm{e}^{\int_{t_0}^{t_0+\tau}\mc L_M(t)dt}-\mc T\mathrm{e}^{\int_{t_0}^{t_0+\tau}\mc L(t)dt}\|_\diamond=\order{\frac{\tau^2\|\dot{\mc L}\|_\diamond}{M}}.
\end{align}

We implement the evolution under $\mc L_M(t)$ through a time-independent simulation $\mathrm{e}^{\sum_{j=1}^M\mc L_j\tau/M}$ followed by a compensation operator:
\begin{align}\label{eq:W_def}
    \mc W(\tau)\coloneqq\mc T\mathrm{e}^{\int_{t_0}^{t_0+\tau}\mc L_M(t)dt}\circ\mathrm{e}^{-\sum_{j=1}^M\mc L_j\tau/M}.
\end{align}
The time-independent part can be realized using our previous analysis, so we focus on $\mc W(\tau)$ here.
Using Dyson and Taylor expansions along with the Pauli-conjugate basis 
$\mc W(\tau)$ can be decomposed as a convergent power series when $\tau\ll1/\|\mc L\|_\p$, where $\|\mc L\|_\p\coloneqq\max_t\|\mc L(t)\|_\p$.
We show the derivation in Appendix~\ref{sec:depend}.
Truncating the series at order $K$ yields a finite Pauli-conjugate LCS formula $\tilde{\mc W}(\tau)$ with 1-norm $\exp[\order{(\|\mc L\|_\p\tau)^3}]$ and bias:
\begin{align}\label{eq:W_error}
    \epsilon_W\coloneqq\|\tilde{\mc W}(\tau)-\mc W(\tau)\|_\diamond\leq\left(\frac{2\mathrm{e}\|\mc L\|_\p\tau}{K+1}\right)^{K+1}.
\end{align}

The single-step simulation thus becomes:
\begin{gather}\label{eq:dependent-step}
    \mc T\mathrm{e}^{\int_{t_0}^{t_0+\tau}\mc L(t)dt}\approx\tilde{\mc W}(\tau)\circ\mathrm{e}^{\sum_{j=1}^M\mc L_j\tau/M},
\end{gather}
of which the gate count increases only by the cost of $\tilde{\mc W}(\tau)$ and still remains $\order{m(q+K)}$ given the sparsity assumption.
To achieve the full simulation, we iterate the short-time evolution $T/\tau$ times, where single-step error stays within $\order{\varepsilon\tau/T}$.
Three key parameters control the simulation accuracy and efficiency:
First, we set $\tau=\order{1/(\|\mc L\|_\p^2 T)}$ which similarly results in $\mu=\order{1}$ even accounting for  the additional factor from $\tilde{\mc W}(\tau)$.
Second, we choose $M=\Theta\left(\|\dot{\mc L}\|_\diamond/(\varepsilon\|\mc L\|^2)\right)$ to bound the discretization error in Eq.~\eqref{eq:discrete_error}.
Finally, by adopting the same order $K$ as in Eq.~\eqref{eq:K}, we suppress both bias in Eq.~\eqref{eq:W_error} and the time-independent error in Eq.~\eqref{eq:dependent-step}.
These choices collectively guarantee the following analysis:
\begin{theorem}
    For a $q$-sparse time-dependent Lindblad operator $\mc L$, simulation time $T$, target error $\varepsilon$, the gate and sample complexities for our Lindbladian simulation algorithm are
    \begin{equation}
        \begin{aligned}
            N_g &= \order{(\|\mc L\|_\p T)^2m\left(q+\frac{\log(1/\varepsilon)}{\log(\|\mc L\|_\p T)}\right)},\\
N&=\Theta\left(1/\varepsilon^2\right)
        \end{aligned}
    \end{equation}
\end{theorem}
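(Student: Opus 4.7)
The plan is to bound the total simulation error by a telescoping argument across the $T/\tau$ time steps and show that, with the stated parameter choices, the bias per step is $\order{\varepsilon\tau/T}$ while the 1-norm per step is $1+\order{(\|\mc L\|_\p\tau)^2}$. First I would fix $\tau=\order{1/(\|\mc L\|_\p^2 T)}$ and write the ideal evolution on $[0,T]$ as a product of $T/\tau$ time-ordered exponentials. Replacing each factor by the approximation in Eq.~\eqref{eq:dependent-step} and inserting it into the telescoping identity for a product of superoperators -- together with the diamond-norm triangle and submultiplicative inequalities -- reduces the problem to (i) bounding one single-step error and (ii) bounding the accumulated 1-norm after $T/\tau$ compositions, exactly as in Theorem~\ref{thm:time-independent}.

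The single-step error splits naturally into three pieces, each of which must be driven to $\order{\varepsilon\tau/T}$. The piecewise-constant discretization error is $\order{\tau^2\|\dot{\mc L}\|_\diamond/M}$ by Eq.~\eqref{eq:discrete_error}, which fixes the choice $M=\Theta(\|\dot{\mc L}\|_\diamond/(\varepsilon\|\mc L\|^2))$. The truncation bias $\epsilon_W$ of $\tilde{\mc W}(\tau)$ is handled by Eq.~\eqref{eq:W_error} with $K$ from Eq.~\eqref{eq:K}. For the time-independent simulation of $\mathrm{e}^{\sum_j \mc L_j\tau/M}$, I would observe that $\sum_j\mc L_j/M$ is a convex combination of Lindbladians with Pauli norm bounded by $\|\mc L\|_\p$ and (under the natural working assumption) the same $q$-sparse support, so Theorem~\ref{thm:time-independent} applies verbatim to give the matching per-step bound with the same $K$.

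For the complexity, the gate count per step decomposes additively into the Theorem~\ref{thm:time-independent} subroutine, costing $\order{m(q+K)}$, plus the implementation of $\tilde{\mc W}(\tau)$ as a sampled Pauli-conjugate circuit costing an additional $\order{K}$ elementary gates and one ancilla. Crucially $M$ enters only as a classical precomputation parameter for $\sum_j\mc L_j/M$ and not the quantum gate count, so the per-step cost remains $\order{m(q+K)}$. Multiplying by $T/\tau=\order{(\|\mc L\|_\p T)^2}$ and substituting $K=\Theta(\log(1/\varepsilon)/\log(\|\mc L\|_\p T))$ yields the stated $N_g$. The sample count then follows from the Monte Carlo variance formula $\epsilon_s=\order{\mu/\sqrt N}$ once $\mu=\Theta(1)$ is established.

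The main obstacle is this last point: controlling the total 1-norm after $T/\tau$ concatenations, since each step now accumulates both the Theorem~\ref{thm:time-independent} overhead from $\tilde{\mc M}(\tau)$ and an additional factor $\exp[\order{(\|\mc L\|_\p\tau)^3}]$ from $\tilde{\mc W}(\tau)$. Multiplicativity of the 1-norm gives a cumulative overhead of $\exp[\order{\|\mc L\|_\p^2\tau T}+\order{\|\mc L\|_\p^3\tau^2 T}]$, which collapses to $\Theta(1)$ precisely because $\tau=\order{1/(\|\mc L\|_\p^2 T)}$. Justifying this rigorously requires extending the ancilla-reuse argument of Appendix~\ref{sec:reuse} so that $\tilde{\mc W}(\tau)$ composes cleanly with the four LCS blocks of the time-independent protocol without intermediate measurements, and verifying that the diamond-norm errors of all nested compositions combine sub-multiplicatively; once this bookkeeping is in place, the theorem follows from the estimates assembled above.
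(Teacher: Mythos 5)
Your proposal follows essentially the same route as the paper's proof: the single-step error is split into the piecewise-constant discretization error, the truncation bias of $\tilde{\mc W}(\tau)$, and the time-independent simulation error, the per-step 1-norms are multiplied and collapse to $\Theta(1)$ under $\tau=\order{1/(\|\mc L\|_\p^2T)}$, and the gate count is the time-independent cost plus an $\order{K}$ overhead for $\tilde{\mc W}(\tau)$ with $M$ entering only classically. The parameter choices, error bookkeeping via concatenation of LCS formulas, and the Monte-Carlo/Hoeffding sample bound all match the paper's argument.
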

In summary, our compensation-based algorithm applies to time-dependent Lindblad simulation and achieves polylogarithmic depths for the first time. \\

\begin{figure*}[t]
    \centering
    \includegraphics[width=0.95\linewidth]{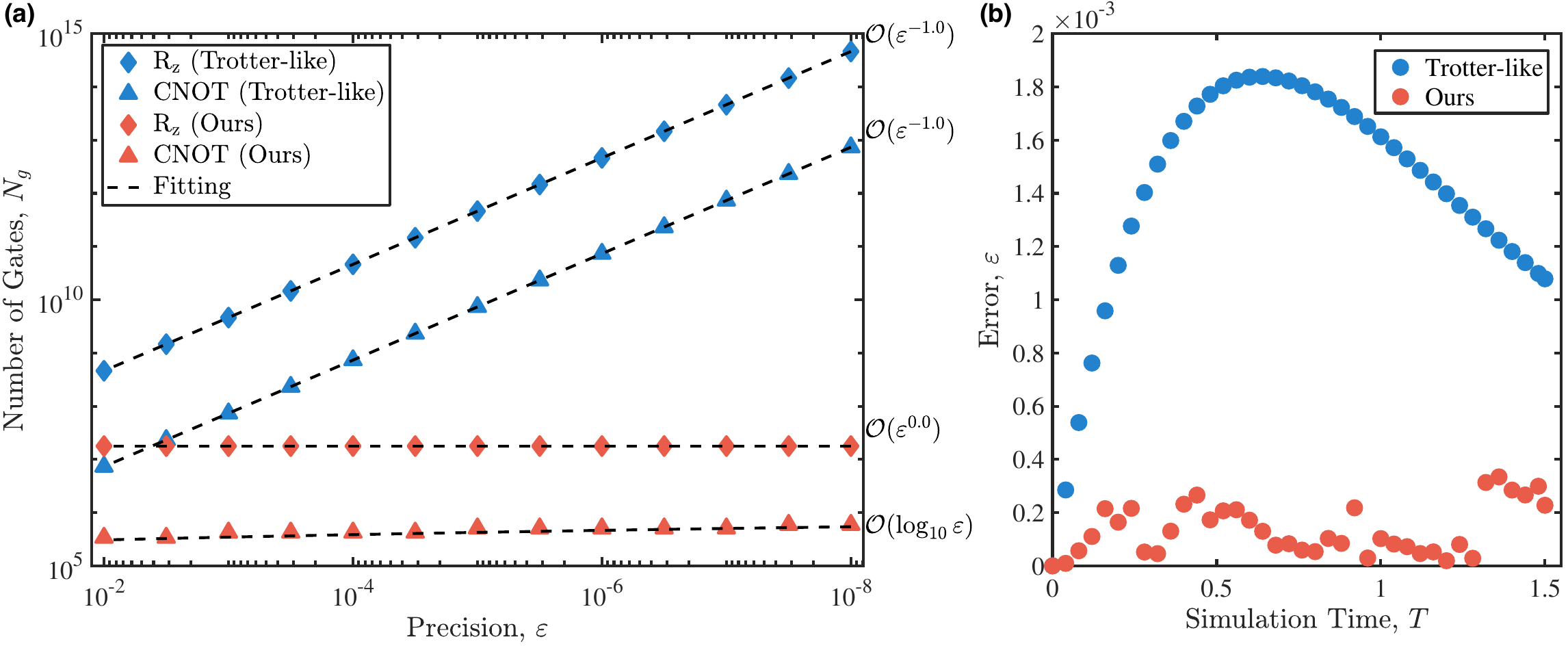}
    \caption{Numerical comparison of our simulation method versus conventional Trotter decomposition for a TFI model with localized dissipation.
    (a) Gate count comparison for a 20-qubit system, showing scaling of $R_z (\theta) = \mathrm{e}^{\ii\theta Z}$ and CNOT gates versus diamond-norm precision.
    Gate counts are derived from theoretical bounds for both methods, with time steps or compensation orders adjusted to achieve specified precision.
    (b) Real-time simulation errors for a five-qubit system ($J=-0.1$, $h=0.2$, $\gamma=1.5$).
    Both simulations track evolution from initial state $\ket{\psi_0}=\ket{10000}$ with time step $\tau=0.02$, measuring projection onto $\ket{\psi_0}\bra{\psi_0}$.
Our method employs $5\times10^6$ Monte-Carlo samples.
    }
    \label{fig:numerical}
\end{figure*}
\paragraph*{Numerical Results.---}
To substantiate our theoretical framework, we conduct comprehensive numerical simulations on a paradigmatic quantum system: a transverse-field Ising (TFI) model with periodic boundary conditions, described by the Hamiltonian:
$H= J\sum_{\langle i,j\rangle} {Z_iZ_j}+h\sum_{j}{X_j}$. 
To incorporate open system dynamics, we introduce dissipation via a jump operator $D=\sqrt{\gamma}\ket{0}\bra{1}$ acting on the first qubit.

Our numerical results, presented in Fig.~\ref{fig:numerical}, demonstrate the efficiency and accuracy of our approach. 
Subplot (a) presents a comparison of the resource requirements between our method and the conventional first-order Trotter algorithm--the later essentially implementing only our coarse-grained stage. 
We compile circuits into elementary gate sets comprising CNOT gates, single-qubit rotations, and single-qubit Clifford gates.
Focusing on the resource-intensive CNOT and rotation gates, our analysis demonstrates that our method achieves exponentially better gate efficiency as precision requirements become more stringent.
We also evaluate the simulation accuracy through real-time evolution in Fig.~\ref{fig:numerical}(b). 
Empirical implementations demonstrate that our approach consistently maintains nearly an order of magnitude improvement in accuracy throughout the simulation.
These strongly support the practicality of our theoretical framework for high-precision open system simulation.

\paragraph*{Conclusion and Outlook.---}
We have developed a Lindblad simulation method that achieves polylogarithmic circuit depths while requiring no more than two ancilla qubits. This breakthrough disproves the presumed trade-off between quantum resource usage and efficiency in solving the Lindblad simulation problem. Furthermore, our method can be extended to simulate time-dependent Lindblad dynamics, achieving for the first time a polylogarithmic dependence on the desired simulation error.
Despite these advancements, there is still room for further optimization. For instance, higher-order methods could be employed in the coarse-grained stages to reduce the complexity associated with extended simulation times. Additionally, an open question remains: can the current multiplicative polylogarithmic complexity be improved to an additive logarithmic complexity? Addressing these challenges could pave the way for even more efficient quantum simulations. \\

\begin{acknowledgements}
We thank Pei Zeng for the helpful discussions.
W.Y. and Q.Z. acknowledge funding from National Natural Science Foundation of China (NSFC) via Project No. 12347104 and No. 12305030, Guangdong Natural Science Fund via Project 2023A1515012185, Hong Kong Research Grant Council (RGC) via No. 27300823, N\_HKU718/23, and R6010-23, Guangdong Provincial Quantum Science Strategic Initiative GDZX2200001, HKU Seed Fund for Basic Research for New Staff via Project 2201100596. 
X.L. and X.Y. are supported by 
the National Natural Science Foundation of China Grant (No.~12361161602 and No.~12175003), NSAF (Grant No.~U2330201), the Innovation Program for Quantum Science and Technology (Grant No.~2023ZD0300200).
The numerics is supported by the High-performance Computing Platform of Peking University.\\
  
  \emph{Note:} While completing this work, we became aware of a related study~\cite{kato2024exponentially}, which also leverages LCS-formula compensation to improve the precision of Lindblad simulations. However, our approach differs in several important ways. Their method does not exploit the decomposition of dissipative terms, a critical strategy for minimizing the number of required ancilla qubits. Moreover, their simulation framework focuses on time-independent Lindbladian superoperators, whereas our approach applies more generally to time-dependent cases.
\end{acknowledgements}

\bibliographystyle{apsrev4-1}
\bibliography{ref}

\setcounter{theorem}{0}
\setcounter{lemma}{0}
\setcounter{proposition}{0}
\setcounter{definition}{0}
\setcounter{corollary}{0}
\setcounter{algocf}{0}
\clearpage
\onecolumngrid
\appendix
\section{Preliminaries}
\subsection{Representation of Superoperators}
In the scope of quantum information processing, we usually deal with states of quantum systems and operations therein.
There are two major families of representations for these ingredients.
For an isolated quantum system, which we refer to as a closed system, its state can be fully described by a set of basis and a corresponding vector in the Hilbert space $\mathscr{H}_n$.
The operations summarized by Schr\"odinger equation and the intrinsic linearity are denoted by multiplying by unitary operators (matrices) in $\mathrm{L}(\mathscr{H}_n)$.
In the remainder of this paper, we will use capital letters to denote these operators.
Nevertheless, it is ubiquitous to find the target quantum system correlating with some environmental systems, which we name an open system.
In this case, states of the system are generically described by density operators (matrices) in $\mathrm{L}(\mathscr{H}_n)$, which are semi-definite Hermitian square matrices with unit traces.
The operations on these open-system states are, therefore, named as \emph{superoperators}.
In the remainder of this paper, we will use calligraphic letters to denote these operators.

To represent the superoperator formally, we make use of the linearity of quantum operations and devise several methods.
For simplicity, we introduce the Pauli process-matrix representation for an arbitrary superoperator
\begin{definition}\label{def:HP_Pauli}
    For an arbitrary linear map $\mc E: \mathrm{L}(\mathscr{H}_n)\rightarrow\mathrm{L}(\mathscr{H}_n)$, we define the Pauli process-matrix representation of $\mc E$ as
    \begin{gather}\label{eq:Pauli_component}
       \mc E[\rho]=\sum_{\alpha,\beta\in{\sf P}^n}\chi_{\alpha\beta}P_\alpha\rho P_\beta^\dag,
   \end{gather}
   where ${\sf P}^n$ consists of $4^n$ Pauli indices, and $P_\alpha, P_\beta$ are $n$-qubit Pauli operators. In this decomposition, each process matrix $\chi$ element denotes a complex coefficient. 
   We further denote $\abs{\chi}\coloneqq\sum_{\alpha,\beta\in{\sf P}^n}\abs{\chi_{\alpha\beta}}$.
\end{definition}

\subsection{Open-System Simulations with Lindblad Master Equation}\label{sec:open}
In this work, we aim to simulate the dynamics of open-system evolution.
To specify the settings of the simulation problem, we start from the most general time-dependent Lindblad master equation.
When referring to the time-dependent Lindblad equation, we consider a bounded continuous Superoperator-valued function $\mc L(t)$ defined on $t\in[0,T]$.
The corresponding Markovian open-system dynamics on a state $\rho$ at time $t_0\in[0,T]$ is depicted by
\begin{align}\label{eq:Lind}
\frac{\mathrm{d}\rho}{\mathrm{d}t}\Big|_{t_0}=\underbrace{\vphantom{\sum_{l=1}^m\left(D_l \rho D_l^{\dagger}-\frac{1}{2} D_l^{\dagger} D_l \rho-\frac{1}{2} \rho D_l^{\dagger} D_l\right)}-i[H(t_0), \rho]}_{\mc H(t_0)}+\underbrace{\sum_{l=1}^m\left(D_l(t_0) \rho D_l(t_0)^{\dagger}-\frac{1}{2} D_l(t_0)^{\dagger} D_l(t_0) \rho-\frac{1}{2} \rho D_l(t_0)^{\dagger} D_l(t_0)\right)}_{\mc D(t_0)}=\mc L(t_0)[\rho],
\end{align}
where $H(t_0)$ and $D_l(t_0)$'s are continuous Hamiltonian and jump operators, respectively.

The Lindblad master equation leads to a clear formalism of the open-system simulation task: Given a Hamiltonian $H(t)$, jump operators $\{D_l(t)\}$, an input state $\rho(0)$, and a time $T$, we want to (approximately) output the final state:
\begin{gather}
    \rho(T)\coloneqq\mc T\mathrm{e}^{\int_{0}^T\mc L(t)dt}[\rho(0)]=\mc T\mathrm{e}^{\int_{0}^T(\mc \mc H(t)+\mc D(t))dt}[\rho(0)],
\end{gather}
where $\mc T$ represents time-ordered operator.
Nevertheless, extracting some properties of the final state is often more relevant than getting the state explicitly.
This generates a compromised version of the simulation task which we denote by the \emph{effective simulation}:  Given a Hamiltonian $H(t)$, jump operators $\{D_l(t)\}$, an input state $\rho(0)$, and a time $T$, we want to estimate some property $O$ of the final state $\Tr(\mc T\mathrm{e}^{\int_{0}^T\mc L(t)dt}[\rho(0)]O)$.
In this work, we focus on the effective simulation of a continuous Lindbladian superoperator with sparse local Hamiltonian and jump operators.

To elucidate these constraints, we first consider the sparsity assumption.
To quantify the sparsity of the Hamiltonian and jump operator, we introduce the family of \emph{Pauli-induced operator norms}.
For an arbitrary operator with Pauli decomposition: $A=\sum_{\alpha\in{\sf P}^n} c_\alpha P_\alpha$, we define the Pauli-induced operator norms based on the vector norm of the Pauli decomposition coefficients: $\|A\|_{p}\coloneqq\|\vec{c}\|_p$.
Based on this, the sparsity constraint requires that $\|H\|_0\leq q$ and $\|D_l\|_0\leq q$ for all $l\in[m]$, implying no more than $q$ non-zero Pauli components.
Furthermore, the locality assumption imposes that those relevant Pauli components are local.

To further consider the superoperator induced by these Hamiltonian and jump operators, we adopt the Pauli process-matrix representation in Def.~\ref{def:HP_Pauli}.
\begin{gather}\label{eq:Lind_chi}
        \mc H(t)=\sum_{\alpha,\beta}\chi_{\alpha\beta}^{(0)}(t)P_\alpha[\cdot]P_\beta^\dag,\hspace{2em}
        \mc D_l=\sum_{\alpha,\beta}\chi_{\alpha\beta}^{(l)}(t)P_\alpha[\cdot]P_\beta^\dag,\ \forall\, l\in[m].
\end{gather}
By the triangle inequality, we have the following conditions on these superoperators:
\begin{gather}\label{eq:1norm_lind}
    \|\mc H(t)\|_\diamond\leq\abs{\chi^{(0)}(t)}\leq2\|H(t)\|_1,\hspace{2em} \|\mc D_{l}(t)\|_\diamond\leq\abs{\chi^{(l)}(t)}\leq2\|D_l(t)\|_1^2,\ \forall\, l\in[m].
\end{gather}
Our problem setting also imposes that the Lindbladian is continuous.
Therefore, we can use the following norm to quantify the Lindbladian's intensity.
\begin{gather}
    \|\mc L\|_\p\coloneqq\max_t2\left(\|H(t)\|_1+\sum_{l=1}^m\|D_l(t)\|_1^2\right).
\end{gather}

\subsection{Linear Combination of Superoperators Formula}\label{sec:LCS}
As a power tool to analyze and realize a broader class of superoperators beyond CPTP maps, we introduce the following linear combination of super-operators (LCS) formula.
\begin{definition}\label{def:lcs}
    A $(\mu,\varepsilon)$-linear-combination-of-super-operator (LCS) formula of a super-operator $\mc V$ is defined to be 
\begin{equation} \label{eq:LCS_app}
\tilde{\mc V} = \sum_{i=0}^{\Gamma-1} c_i \mc{V}_i = \mu \sum_{i=0}^{\Gamma-1} \Pr(i) \mc{V}_i,
\end{equation}
such that the diamond norm distance $\|\mc{V}-\tilde{\mc V}\|_\diamond\leq \varepsilon$, and $\max_i\|\mc{V}_i\|_\diamond=1$. 
We require every $c_i$ a positive real number and absorb the phase into $\mc{V}_i$.
Thus, $\mu\coloneqq\sum_{i=0}^{\Gamma-1}c_i$, and $\Pr(i)\coloneqq c_i/\mu$ is the probability of super-operators $\mc{V}_i$. 
We call $\mu$ the $1$-norm of this $(\mu,\varepsilon)$-LCS formula. 
\end{definition}

Given the definition of the formula in Eq.~\eqref{eq:LCS_app}, the target map can be effectively implemented with a $\varepsilon$ precision.
To achieve this, we employ a Monte-Carlo sampling over the decomposition $\{\mc V_i\}$, as summarized in Alg.~\ref{alg:random_sample}. 
After applying the sampled superoperator, we measure the state on a desired observable $O$ and re-normalize the outcome by the 1-norm $\mu$.
The results from the preceding post-processing are supposed to recover the map.

This Monte-Carlo routine to realize the LCS formula seems to require that all the basis superoperators $\{\mc V_i\}$ must be physical to be applied
Nevertheless, we will show that every sampled $\mc V_i$ can also be effectively implemented.
We validate this observation by introducing the \emph{unitary-conjugate} basis:
\begin{gather}
    \mc V_{\phi,\alpha,\beta}[\rho]\coloneqq\frac{1}{2}(\mathrm{e}^{\ii\phi} U_\alpha\rho U_\beta^\dag+\mathrm{e}^{-\ii\phi} U_\beta\rho U_\alpha^\dag).
\end{gather}
This family of maps is in general non-CPTP, which cannot be applied directly in a quantum system.
On the other hand, each map can be effectively implemented in the following circuit with only one ancilla qubit.
\begin{figure}[t]
    \centering
    \[ \Qcircuit @C=2em @R=1em {
  \lstick{\ket{+}} & \ctrl{1} & \ctrlo{1} &  \gate{R_Z(-\phi/2)} &\measureD{X}\\
  \lstick{\rho} & \gate{U_\alpha} & \gate{U_\beta} & \qw & \mc \mc V_{\phi,\alpha,\beta}[\rho]
}\]
    \caption{The circuit to effectively implement each unittary conjugate.}
    \label{fig:LCS}
\end{figure}
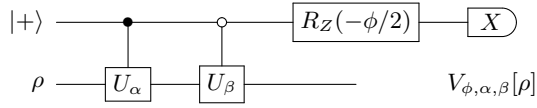
Since the subsequent Monte-Carlo sampling also realizes the formula in an effective manner, we can sample and implement this circuit for our unitary-conjugate formula.
Moreover, we also impose the constraint that $\Pr(\alpha,\beta)=\Pr(\beta,\alpha)$ and $\phi_{\alpha,\beta}=-\phi_{\beta,\alpha}$.

In the scope of this work and most quantum computing tasks, we exclusively focus on the Hermitian-Preserving linear maps.
To cater to the experimental implementation, we recruit a specific unitary-conjugate strategy intimately related to the Pauli process-matrix representation of the target map.
Based on~\cite{de1967linear,poluikis1981completely}, every Hermitian-preserving map $\mc E$ has a Hermitian Pauli process matrix $\chi$.
Therefore, we can decompose it via Def.~\ref{def:HP_Pauli} as
\begin{align}\label{eq:HP_decomp}
    \mc E=\sum_{\alpha,\beta}\chi_{\alpha\beta}P_\alpha[\cdot]P_\beta^\dag=\sum_{\alpha,\beta}\frac{\chi_{\alpha\beta}}{2}P_\alpha[\cdot]P_\beta^\dag+\frac{\chi^\star_{\alpha\beta}}{2}P_\beta[\cdot]P_\alpha^\dag=\sum_{\alpha,\beta}|\chi_{\alpha\beta}|\mc V_{\phi_{\alpha,\beta},\alpha,\beta}=\mu(\mc E)\sum_{\alpha,\beta}\Pr(\alpha,\beta)\mc V_{\phi_{\alpha,\beta},\alpha,\beta},
\end{align}
where $\mu(\mc E)=\abs{\chi}$, $\Pr(\alpha,\beta)=\abs{\chi_{\alpha\beta}}/\mu(\mc E)$, and $\phi_{\alpha,\beta}=\arctan{\frac{\Im \chi_{\alpha\beta}}{\Re\chi_{\alpha\beta}}}=-\phi_{\beta,\alpha}$.
Each basis operation is
\begin{gather}
    \mc V_{\phi,\alpha,\beta}(\rho)\coloneqq\frac{1}{2}(\mathrm{e}^{\ii\phi} P_\alpha\rho P_\beta^\dag+\mathrm{e}^{-\ii\phi} P_\beta\rho P_\alpha^\dag).
\end{gather}
With $\|\mc V_{\phi,\alpha,\beta}\|_\diamond=1$, Eq.~\eqref{eq:HP_decomp} forms a $(\mu(\mc E),0)$-LCS formula of $\mc E$ according to Def.~\ref{def:lcs}.
We call this the \emph{Pauli-conjugate} LCS formula.

When adopting the LCS formula to represent operations in the algorithm, we will inevitably encounter concatenations.
We show that the concatenation of LCS formulas will naturally result in a composed formula:
\begin{lemma}\label{lm:concate}
    Suppose we have a $(\mu_1,\epsilon_1)$-LCS formula, $\tilde{\mc U}$, for a superoperator $\mc U$ and a $(\mu_2,\epsilon_2)$-LCS formula, $\tilde{\mc V}$, for $\mc V$.
    The composition $\tilde{\mc V}\circ\tilde{\mc U}$ is a $(\mu_1\mu_2,\mu_1\epsilon_2+\mu_2\epsilon_1+\epsilon_1\epsilon_2)$-LCS formula for the superoperator $\mc V\circ\mc U$.
    Particularly, the bias can be reduced to $\epsilon_1+\epsilon_2+\epsilon_1\epsilon_2$ when both $\mc U$ and $\mc V$ are CPTP.
\end{lemma}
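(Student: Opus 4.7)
The plan is to handle the two claims separately, both via the triangle inequality and the submultiplicativity of the diamond norm under composition. The basic observation is that the natural product formula $\tilde{\mc V}\circ\tilde{\mc U} = \sum_{i,j} c_i d_j\,(\mc V_j\circ\mc U_i)$ is itself a convex combination (after absorbing phases into each $\mc V_j\circ\mc U_i$), and that each $\|\mc V_j\circ\mc U_i\|_\diamond \leq 1$ because the diamond norm is submultiplicative and each basis operator has unit diamond norm. So the 1-norm part of the statement is immediate: $\sum_{i,j} c_i d_j = \mu_1\mu_2$.

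For the bias in the general case, I will insert a telescoping intermediate term and use the triangle inequality:
\begin{align*}
\|\tilde{\mc V}\circ\tilde{\mc U} - \mc V\circ\mc U\|_\diamond
&\leq \|\tilde{\mc V}\circ(\tilde{\mc U}-\mc U)\|_\diamond + \|(\tilde{\mc V}-\mc V)\circ\mc U\|_\diamond \\
&\leq \|\tilde{\mc V}\|_\diamond\,\|\tilde{\mc U}-\mc U\|_\diamond + \|\tilde{\mc V}-\mc V\|_\diamond\,\|\mc U\|_\diamond.
\end{align*}
The first factor is bounded by $\|\tilde{\mc V}\|_\diamond \leq \mu_2$, since $\tilde{\mc V}$ is a convex combination (weighted by $\mu_2$) of operators of unit diamond norm. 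For the second, I use $\|\mc U\|_\diamond \leq \|\tilde{\mc U}\|_\diamond + \|\tilde{\mc U}-\mc U\|_\diamond \leq \mu_1 + \epsilon_1$. Combining yields the bias bound $\mu_2\epsilon_1 + (\mu_1+\epsilon_1)\epsilon_2 = \mu_1\epsilon_2 + \mu_2\epsilon_1 + \epsilon_1\epsilon_2$.

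For the CPTP special case, the bound $\|\mc U\|_\diamond=\|\mc V\|_\diamond=1$ lets me replace $\mu_1, \mu_2$ by $1$ in the right places, but I have to be careful about which intermediate to insert. The cleaner split is to use $\mc V\circ\tilde{\mc U}$ as the pivot:
\begin{align*}
\|\tilde{\mc V}\circ\tilde{\mc U} - \mc V\circ\mc U\|_\diamond
&\leq \|(\tilde{\mc V}-\mc V)\circ\tilde{\mc U}\|_\diamond + \|\mc V\circ(\tilde{\mc U}-\mc U)\|_\diamond.
\end{align*}
The second summand is at most $\|\mc V\|_\diamond\,\epsilon_1 = \epsilon_1$. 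For the first, I expand $\tilde{\mc U} = \mc U + (\tilde{\mc U}-\mc U)$ so that
\[
\|(\tilde{\mc V}-\mc V)\circ\tilde{\mc U}\|_\diamond \leq \|\tilde{\mc V}-\mc V\|_\diamond\|\mc U\|_\diamond + \|\tilde{\mc V}-\mc V\|_\diamond\|\tilde{\mc U}-\mc U\|_\diamond \leq \epsilon_2 + \epsilon_1\epsilon_2,
\]
giving the claimed total $\epsilon_1+\epsilon_2+\epsilon_1\epsilon_2$.

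I do not expect a major obstacle here: the entire argument is bookkeeping with the triangle inequality and submultiplicativity. The only subtlety is the correct choice of pivot when splitting the error, since inserting $\tilde{\mc V}\circ\mc U$ versus $\mc V\circ\tilde{\mc U}$ gives different bounds and only the latter shaves off the $\mu$ factors in the CPTP regime. Everything else, including verifying that the product is a valid LCS formula in the sense of Def.~\ref{def:lcs} (convexity of the coefficients, unit diamond norm of the products), follows directly from the definitions.
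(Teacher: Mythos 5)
Your proof is correct and follows essentially the same route as the paper's: a triangle-inequality split around an intermediate composition plus submultiplicativity of the diamond norm, with $\|\mc U\|_\diamond\leq\mu_1+\epsilon_1$ for the general bound. The only cosmetic difference is in the CPTP case, where you switch the pivot to $\mc V\circ\tilde{\mc U}$ while the paper keeps the pivot $\tilde{\mc V}\circ\mc U$ and instead uses $\|\tilde{\mc V}\|_\diamond\leq1+\epsilon_2$; both yield the same bound $\epsilon_1+\epsilon_2+\epsilon_1\epsilon_2$.
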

\begin{proof}
Since the concatenation of convex combinations is still a convex combination, the resultant from the composition of LCS formulas is a valid LCS formula.
The statement of the 1-norm is trivially true. 
We will derive the bias here.
    \begin{align}\label{eq:composition}
        \|\tilde{\mc V}\circ\tilde{\mc U}-\mc V\circ\mc U\|_\diamond
        \leq&\|\tilde{\mc V}\circ\tilde{\mc U}-\tilde{\mc V}\circ\mc U\|_\diamond+\| \tilde{\mc V}\circ\mc U-\mc V\circ\mc U\|_\diamond
        \leq\|\tilde{\mc U}-\mc U\|_\diamond\|\tilde{\mc V}\|_\diamond+\|\tilde{\mc V}-\mc V\|_\diamond\|\mc U\|_\diamond\notag\\
        \leq& \epsilon_1\mu_2+\epsilon_2(\mu_1+\epsilon_1).
        \end{align}
        For the special case where both $\mc U$ and $\mc V$ are CPTP, they have unit diamond norms.
        Thus, we have $\|\tilde{\mc V}\|_\diamond\leq1+\epsilon_2$.
        Substitute this into Eq.~\eqref{eq:composition}, and we can prove the claim.
\end{proof}

\begin{algorithm}[t]
    \caption{Random Sampling of LCS}\label{alg:random_sample}
    \KwIn{An LCS formula $\{\Pr(i),\mc V_i\}_i$; An input state $\rho_0$}
    \KwOut{An effectively evolved state}
    Sample $i$ according to the probability distribution $\{\Pr(i)\}$\;
    Apply $\mc V_i$ to $\rho_0$\;
    \Return $\mc V_{i}[\rho_0]$
\end{algorithm}

\section{Time-Independent Lindblad Simulation}
Our approach to Lindblad simulation comprises two distinct stages: a coarse-grained simulation using experimentally feasible methods, followed by systematic error compensation to eliminate higher-order inaccuracies. This two-stage strategy effectively balances implementation practicality with simulation precision.

In each time step with length $\tau$, the foundation of our coarse-grained stage rests on the Lie-Trotter decomposition:
\begin{gather}\label{eq:Lind_de_app}
    \mathrm{e}^{\mc L\tau}=\overleftarrow{\prod_{l=1}^m}\left(\mathrm{e}^{\mc D_l\tau}\right)\circ \mathrm{e}^{\mc H\tau}+\order{\tau^2},
\end{gather}
where the directional product notation denotes superoperator concatenation:
\begin{gather}
    \overrightarrow{\prod_{l=1}^m}\mc A_l=\mc A_1\circ\cdots\circ\mc A_m, \hspace{0.5em}\text{and}\hspace{0.5em}\overleftarrow{\prod_{l=1}^m}\mc A_l=\mc A_m\circ\cdots\circ\mc A_1.
\end{gather}

This decomposition reveals two essential components: coherent Hamiltonian simulation and dissipative simulation.
In the remainder of the section, we first detail how we can improve the Lie-Trotter approximation in Eq.~\eqref{eq:Lind_de_app} by our LCS-based compensation.
Moreover, we then elucidate how our approach implements these two major components with high accuracy respectively.
These subroutines can be employed collectively to facilitate the whole simulation.

\subsection{Compensation for Lie-Trotter Errors}\label{sec:M}
The Lie-Trotter error from Eq.~\eqref{eq:Lind_de_app} clearly contains a second-order leading term, which significantly constrains the accuracy of the overall simulation.
To overcome this inaccuracy, our method has utilized the following compensation $\mc M(\tau)$ such that
\begin{equation}\label{eq:Lind_decomp_ideal}
\begin{aligned}
		\mathrm{e}^{\mc L\tau}=\mc M(\tau) \circ \overleftarrow{\prod_{l=1}^m}\left(\mathrm{e}^{\mc D_l\tau}\right)\circ \mathrm{e}^{\mc H\tau}.
\end{aligned}
\end{equation}
Therefore, we can derive the explicit form of this compensation as
\begin{equation}\label{eq:M_def}
	\mc M(\tau) \coloneqq \mathrm{e}^{\mc L\tau} \circ\mathrm{e}^{-\mc H\tau}\circ\overrightarrow{\prod_{l=1}^m} \mathrm{e}^{-\mc D_l\tau}.
\end{equation}

It is noteworthy that this compensation contains several inverse dissipative evolution, which can not be physically implemented.
We instead adopt the LCS-formula method to effectively realize this compensation.
In the following, we will elaborate on how to construct the specific formula that approximates $\mc M(\tau)$ and uses the easily implementable Pauli conjugates.

\begin{proposition}\label{prop:trotter}
    The compensation $\mc M(\tau)$ in Eq.~\eqref{eq:M_def} can be constructed as a $(\mu,\epsilon)$-LCS formula on the Pauli-conjugate basis such that 
    \begin{align*}
        \mu\leq \mathrm{e}^{16(\|H\|_1+\sum_{l=1}^m\|D_l\|_1^2)^2\tau^2},\hspace{0.5em}\text{and}\hspace{0.5em}
        \epsilon\leq\left(\frac{4e(\|H\|_1+\sum_{l=1}^m\|D_l\|_1^2)\tau}{K_M+1}\right)^{K_M+1},
    \end{align*}
    where $K_M$ is the truncation order.
    The gate count for this formula is $\order{K_M}$ using one ancilla qubit.
\end{proposition}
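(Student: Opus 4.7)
The plan is to expand each exponential in the definition $\mc M(\tau) = \mathrm{e}^{\mc L\tau}\circ\mathrm{e}^{-\mc H\tau}\circ\overrightarrow{\prod_l}\mathrm{e}^{-\mc D_l\tau}$ as a Taylor series, multiply them out, and collect terms by total order in $\tau$. This directly yields the coefficients $\mc M_k$ of Eq.~\eqref{eq:M_k}. A quick check of the low orders gives $\mc M_0=\mc I$ and, because $\mc L=\mc H+\sum_l\mc D_l$, a cancellation forces $\mc M_1=0$; this vanishing of the first-order term is what will ultimately produce the quadratic-in-$\tau$ exponent in the claimed 1-norm bound.

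Next I would convert each factor into its Pauli-conjugate LCS representation using Eq.~\eqref{eq:HP_decomp} applied to $\mc H$ and to each $\mc D_l$. This expresses every product of superoperators inside $\mc M_k$ as a finite sum of Pauli conjugates $P_\alpha[\cdot]P_\beta^\dag$ whose overall phase can be absorbed into $\phi_{\alpha,\beta}$; hence $\mc M_k$, and after truncation the whole $\tilde{\mc M}(\tau)=\sum_{k=0}^{K_M}\mc M_k\tau^k$, is manifestly a Pauli-conjugate LCS formula. Applying the triangle inequality in the Pauli-coefficient 1-norm and using $|\mc H|_\p\le 2\|H\|_1$, $|\mc D_l|_\p\le 2\|D_l\|_1^2$ from Eq.~\eqref{eq:1norm_lind}, together with the multinomial identity
\begin{equation*}
\sum_{s_0+\cdots+s_m=t}\frac{(2\|H\|_1)^{s_0}\prod_l(2\|D_l\|_1^2)^{s_l}}{s_0!\cdots s_m!}=\frac{\|\mc L\|_\p^{t}}{t!},
\end{equation*}
gives the clean bound $\mu(\mc M_k)\le (2\|\mc L\|_\p)^k/k!$.

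For the two quantitative estimates, I would proceed as follows. The truncation bias is the tail $\sum_{k=K_M+1}^\infty \mu(\mc M_k)\tau^k$, which by the $\mu(\mc M_k)$ bound reduces to the standard exponential tail $\sum_{k>K_M}(2\|\mc L\|_\p\tau)^k/k!$; using $k!\ge (k/e)^k$ and $2\|\mc L\|_\p=4(\|H\|_1+\sum\|D_l\|_1^2)$ gives the stated $((4e(\|H\|_1+\sum\|D_l\|_1^2)\tau)/(K_M+1))^{K_M+1}$. For the 1-norm, I would exploit $\mc M_0=\mc I$ and $\mc M_1=0$: writing $\mu(\tilde{\mc M}(\tau))\le 1+\sum_{k\ge 2}(2\|\mc L\|_\p\tau)^k/k!$ and comparing term by term with $\exp(c(\|\mc L\|_\p\tau)^2)=\sum_{j\ge 0}(c(\|\mc L\|_\p\tau)^2)^j/j!$, one checks (after fixing $c$, for instance $c=4$ so that $16(\|H\|_1+\sum\|D_l\|_1^2)^2=4\|\mc L\|_\p^2$) that the quadratic exponential dominates the shifted exponential tail, yielding $\mu\le\mathrm{e}^{16(\|H\|_1+\sum\|D_l\|_1^2)^2\tau^2}$.

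Finally, the gate count follows from the LCS implementation of Fig.~\ref{fig:concept}(c): a sampled term from $\mc M_k$ is a product of at most $k\le K_M$ Pauli factors on each side of $\rho$, realized by two sequences of controlled Paulis on a single ancilla plus a single-qubit $Z$-rotation and $X$-measurement, so $O(K_M)$ elementary gates suffice. The main obstacle I anticipate is the 1-norm estimate: the naive bound $\mu(\mc M_k)\le(2\|\mc L\|_\p)^k/k!$ does not by itself give an $\mathrm{e}^{O(\|\mc L\|_\p^2\tau^2)}$ envelope, so the argument has to genuinely use the cancellation $\mc M_1=0$ (and mild control of higher coefficients) to suppress the linear-in-$\tau$ growth that would otherwise appear.
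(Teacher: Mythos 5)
Your proposal is correct and follows essentially the same route as the paper's proof: Taylor-expand the composed exponentials, use $\mc M_1=0$, bound $\mu(\mc M_k)\leq(2\|\mc L\|_\p)^k/k!$ via the multinomial identity over the Pauli-conjugate decomposition, and then control the truncated 1-norm by $\mathrm{e}^{x}-x\leq \mathrm{e}^{x^2}$ with $x=2\|\mc L\|_\p\tau$ and the tail by $k!\geq(k/\mathrm{e})^k$. The "obstacle" you flag at the end is exactly the step the paper takes, so there is no gap.
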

\begin{proof}
    Based on Taylor's expansion, we decompose $\mc M(\tau)$ based on the superoperators $\mc H$ and $\{\mc D_l\}$:
\begin{align}\label{eq:M}
    \mc M(\tau)=\mathrm{e}^{\mc L\tau}\circ \mathrm{e}^{-\mc H\tau}\circ\overrightarrow{\prod_{l=1}^m}\mathrm{e}^{-\mc D_l\tau}=\sum_{s=0}^\infty\frac{(\mc H+\sum_{l=1}^m\mc D_l)^s\tau^s}{s!}\sum_{s_0=0}^\infty\frac{(-\mc H)^{s_0}\tau^{s_0}}{s_0!}\overrightarrow{\prod_{l=1}^m}\sum_{s_l=0}^\infty\frac{(-\mc D_l)^{s_l}\tau^{s_l}}{s_l!}=\sum_{k=0}^\infty \tau^k \mc M_k.
\end{align}
It is clear to see that $\mc M_0=\mc I$ and $\mc M_1=0$.
Each higher-order $\mc M_k$ based on is identified as a summation of Pauli conjugates:
\begin{align}\label{eq:M_decomp}
    \mc M_k\coloneqq&\sum_{s+\sum_{l=0}^ms_l=k}\frac{(\mc H+\sum_{l=1}^m\mc D_l)^s(-\mc H)^{s_0}\overrightarrow{\prod_{l=1}^m}(-\mc D_l)^{s_l}}{s!\prod_{l=0}^m(s_l!)}\notag\\
    =&\sum_{s+\sum_{l=0}^ms_l=k}\frac{1}{s!\prod_{l=0}^m(s_l!)}\sum_{\Vec{j}\in\{0,\cdots,m\}^s}\left(\overrightarrow{\prod_{i=1}^s}\mc E_{j_i}\right)(-\mc H)^{s_0}\overrightarrow{\prod_{l=1}^m}(-\mc D_l)^{s_l}\notag\\
    =&\sum_{s+\sum_{l=0}^ms_l=k}\frac{(-1)^{k-s}}{s!\prod_{l=0}^m(s_l!)}\sum_{\Vec{j}\in\{0,\cdots,m\}^s}\left(\overrightarrow{\prod_{i=1}^s}\sum_{\alpha,\beta}\chi_{\alpha\beta}^{(j_i)}P_\alpha(\cdot)P_\beta^\dag\right)\circ\overrightarrow{\prod_{l=0}^m}\left(\sum_{\alpha,\beta}\chi_{\alpha\beta}^{(l)}P_\alpha(\cdot)P_\beta^\dag\right)^{\circ s_l}\notag\\
    =&\sum_{s+\sum_{l=0}^ms_l=k}\frac{(-1)^{k-s}}{s!\prod_{l=0}^m(s_l!)}\sum_{\Vec{j}\in\{0,\cdots,m\}^s} \left(\overrightarrow{\prod_{i=1}^s}\,\abs{\chi^{(j_i)}}\sum_{\alpha,\beta}\frac{\chi^{(j_i)}_{\alpha\beta}}{\abs{\chi^{(j_i)}}}P_\alpha(\cdot)P_\beta^\dag\right)\circ\overrightarrow{\prod_{l=0}^m}\left(\abs{\chi^{(l)}}\sum_{\alpha,\beta}\frac{\chi^{(l)}_{\alpha\beta}}{\abs{\chi^{(l)}}}P_\alpha(\cdot)P_\beta^\dag\right)^{\circ s_l}\notag\\
    =&\mu(\mc M_k)\sum_{s+\sum_{l=0}^ms_l=k}\Pr(\Vec{s}\,|\,k)\sum_{\Vec{j}\in\{0,\cdots,m\}^s}\Pr(\Vec{j}\,|\,\Vec{s},k)\sum_{\Vec{\alpha},\Vec{\beta}}\Pr(\Vec{\alpha},\Vec{\beta}\,|\,\Vec{j},\Vec{s},k)\mc V_{(\Vec{s},\Vec{j}),\Vec{\alpha},\Vec{\beta}},
\end{align}
where we denote $\mc E_0=\mc H$ and $\mc E_l=\mc D_l$ for all $l\in[m]$.
Similarly, we recruit $\Vec{s}\coloneqq(s,s_0,\cdots,s_m)$, $\Vec{j}\coloneqq(j_1,\cdots,j_s)$, $\Vec{\alpha}\coloneqq(\alpha_{1},\cdots,\alpha_{k})$, $P_{\Vec{\alpha}}\coloneqq P_{\alpha_{1}}\cdots P_{\alpha_{k}}$, $\Vec{\beta}\coloneqq(\beta_{1},\cdots,\beta_{k})$, and $P_{\Vec{\beta}}\coloneqq P_{\beta_{1}}\cdots P_{\beta_{k}}$.
The map $\mc V_{(\Vec{s},\Vec{j}),\Vec{\alpha},\Vec{\beta}}$ is a Pauli conjugate for $P_{\Vec{\alpha}}$ and $P_{\Vec{\beta}}^\dag$ with the overall phase equal to the summation of phases from all $\chi$ matrices and the sign of $(-1)^{k-s}$.
The normalization factor is $\mu(\mc M_k)=\frac{(2\sum_{l=0}^m\abs{\chi^{(l)}})^k}{k!}$. 
The probability $\Pr(\Vec{s}\,|\,k)$ is a multinomial distribution such that
\begin{align}
    \Pr(\Vec{s}\,|\,k)=\Pr(s,s_0,\cdots,s_m\,|\,k)\coloneqq\frac{k!}{s!\prod_{l=0}^m(s_l)!}\frac{\prod_{l=0}^m\abs{\chi^{(l)}}^{s_l}}{2^k(\sum_{l=0}^m\abs{\chi^{(l)}})^{k-s}}.
\end{align}
The probability $\Pr(\Vec{j}\,|\,\Vec{s},k)$ can be viewed as the product of $s$ independent distributions:
\begin{gather}
    \Pr(\Vec{j}\,|\,\Vec{s},k)\coloneqq\prod_{i=1}^s\frac{\abs{\chi^{(j_i)}}}{\sum_{l=0}^m\abs{\chi^{(l)}}}.
\end{gather}
Moreover, the probability $\Pr(\Vec{\alpha},\Vec{\beta}\,|\,\Vec{j},\Vec{s},k)$ can also be decomposed as the product of $k$ independent distributions about the corresponding Pauli indices:
\begin{align}
    \Pr(\Vec{\alpha},\Vec{\beta}\,|\,\Vec{j},\Vec{s},k)\coloneqq\prod_{i=1}^s\frac{\chi^{(j_i)}_{\alpha\beta}}{\abs{\chi^{(j_i)}}}\prod_{l=0}^m\prod_{i=1}^{s_{l}}\frac{\chi^{(l)}_{\alpha_{s+\sum_{x=0}^{l-1}s_x+i}\beta_{s+\sum_{x=0}^{l-1}s_x+i}}}{\abs{\chi^{(l)}}}.
\end{align}

Regarding practical concerns, we always need to truncate the expansion in Eq.~\eqref{eq:M} to a fixed order $K_{M}$.
We denote the resulting summation by $\tilde{\mc M}(\tau)$:
\begin{gather}\label{eq:LCS_M}
    \tilde{\mc M}(\tau)\coloneqq\sum_{k=0}^{K_M}\tau^k\mc M_k=\mu(\tilde{\mc M}(\tau))\left(\Pr(0)\mc I+\sum_{k=2}^{K_M}\Pr(k)\frac{\mc M_k}{\mu(\mc M_k)}\right),
\end{gather}
where the 1-norm is
\begin{align}\label{eq:1-norm_M}
    \mu(\tilde{\mc M}(\tau))\coloneqq&1+\sum_{k=2}^{K_M}\tau^k\mu(\mc M_k)=1+\sum_{k=2}^{K_M}\frac{(2\sum_{l=0}^m\abs{\chi^{(l)}}\tau)^k}{k!}=\mathrm{e}^{2\sum_{l=0}^m\abs{\chi^{(l)}}\tau}-2\sum_{l=0}^m\abs{\chi^{(l)}}\tau\notag\\
    \leq&\mathrm{e}^{(2\sum_{l=0}^m\abs{\chi^{(l)}}\tau)^2}\leq\mathrm{e}^{16(\|H\|_1+\sum_{l=1}^m\|D_l\|_1^2)^2\tau^2},
\end{align}
where the last inequality comes from Eq.~\eqref{eq:1norm_lind}.
The probabilities satisfy $\Pr(0)=1/\mu(\tilde{\mc M}(\tau))$ and $\Pr(k)=\mu(\mc M_k)\tau^k/\mu(\tilde{\mc M}(\tau))$.

This truncation will introduce an additional error in the approximation toward $\mc M(\tau)$.
Fortunately, we can bound this error and show that it is benign.
\begin{align}
    \|\tilde{\mc M}(\tau)-\mc M(\tau)\|_\diamond=&\left\|\sum_{k=K_M+1}^\infty \tau^k\mc M_k\right\|_\diamond\leq\sum_{k=K_M+1}^\infty\tau^k\|\mc M_k\|_\diamond\notag\\
    \leq& \sum_{k=K_M+1}^\infty\tau^k\mu(\mc M_k)\sum_{s+\sum_{l=0}^ms_l=k}\Pr(\Vec{s}\,|\,k)\sum_{\Vec{j}\in\{0,\cdots,m\}^s}\Pr(\Vec{j}\,|\,\Vec{s},k)\sum_{\Vec{\alpha},\Vec{\beta}}\Pr(\Vec{\alpha},\Vec{\beta}\,|\,\Vec{j},\Vec{s},k)\|\mc V_{\phi^{\Vec{s},\Vec{j}}_{\Vec{\alpha},\Vec{\beta}},\Vec{\alpha},\Vec{\beta}}\|_\diamond\notag\\
    =&\sum_{k=K_M+1}^\infty\frac{(2\sum_{l=0}^m\abs{\chi^{(l)}}\tau)^k}{k!}\leq \left(\frac{4e(\|H\|_1+\sum_{l=1}^m\|D_l\|_1^2)\tau}{K_M+1}\right)^{K_M+1}.
\end{align}

To implement this $\tilde{\mc M}(\tau)$, we can use the circuit in Sec.~\ref{sec:LCS}.
Given that $\tilde{\mc M}(\tau)$ is an LCS formula with the Pauli-conjugate basis, the circuit requires one ancilla qubit and two control-Pauli gates.
Moreover, the Pauli terms in the LCS can achieve the locality of $\order{K_M}$, which implies $\order{K_M}$ elementary gates. 

\end{proof}

\subsection{High-Accuracy Dissipative Simulations}\label{sec:N}
The dissipative simulation is inherently challenging to quantum computing since it in general requires non-unitary operations.
In this section, we propose a two-stage method to realize this part:
After achieving a coarse-grained simulation via the dilated Hamiltonian idea, we continue an additional LCS formula to improve accuracy.

To limit the dimension of the ancillary system used, we narrow our focus to every single jump operator as in Eq.
\begin{gather}\label{eq:Ll}
   \mc D_l\coloneqq D_l [\cdot] D_l^{\dagger}-\frac{1}{2} D_l^{\dagger} D_l [\cdot]-\frac{1}{2} [\cdot] D_l^{\dagger} D_l,\ \forall\,l\in[m].
\end{gather} 
To achieve the coarse-grained simulation of $\mathrm{e}^{\mc D_l \tau}$, we recruit only one ancilla qubit and devise a joint Hamiltonian evolution $\mathrm{e}^{-\ii J_l\sqrt{\tau}}$ with
\begin{equation}
J_l\coloneqq\left(\begin{array}{cccc}
0 & D_l^{\dagger} \\
D_l & 0 
\end{array}\right)=\sigma_-\otimes D_l+\sigma_+\otimes D_l^\dag.
\end{equation}
For any non-negative integer $k$, the direct calculation shows: 
\begin{equation}
	\begin{aligned}
		J_l^{2k+1} &=  \ket{0}\bra{1}_a \otimes  \left(D_l D_l^\dag\right)^k D_l^\dag+ \ket{1}\bra{0}_a\otimes D_l \left(D_l^\dag D_l\right)^k,  \\
		J_l^{2k+2} &=  \ket{1}\bra{1}_a \otimes D_l\left(D_l^\dag D_l\right)^kD_l^\dag + \ket{0}\bra{0}_a \otimes \left(D_l^\dag D_l\right)^{k+1},
	\end{aligned}
\end{equation}
where the subscript $a$ indicates the ancillary system.
Then we can get the matrix exponential according to Taylor's expansion.
\begin{equation}
\begin{aligned}
		\mathrm{e}^{-\ii J_l\sqrt{\tau}}\ket{0}_a = \sum_{k = 0}^\infty \frac{(-i\sqrt{\tau})^{2k}}{(2k)!}\ket{0}_a\otimes \left(D_l^\dag D_l\right)^k  + \sum_{k = 0}^\infty \frac{(-i\sqrt{\tau})^{2k+1}}{(2k+1)!} \ket{1}_a\otimes  D_l \left(D_l^\dag D_l\right)^k.
\end{aligned}
\end{equation}
Therefore, the overall dynamical map has the effect, 
\begin{equation}\label{eq:J}
	\begin{aligned}
		\Tr_a[\mathrm{e}^{-\ii J_l\sqrt{\tau}} &[\ket{0}\bra{0}_a\otimes \rho] \mathrm{e}^{\ii J_l\sqrt{\tau}} ] \\
        =&\mc I +\sum_{k=1}^{\infty}\tau^k\left(\sum_{j=0}^k\frac{(-1)^k}{(2j)!(2k-2j)!}\mc D_{l,2}^j\circ\mc D_{l,3}^{k-j}+\sum_{j=0}^{k-1}\frac{(-1)^{k-1}}{(2j+1)!(2k-2j-1)!}\mc D_{l,1}\circ\mc D_{l,2}^{j}\circ\mc D_{l,3}^{k-j-1}\right)[\rho]\\
        =& \mc I+\sum_{k=1}^{\infty}\tau^k\sum_{j_1,\cdots,j_k=1}^3b^{(k)}_{j_1,\cdots,j_k}\mc D_{l,j_1}\circ\cdots\circ\mc D_{l,j_k}[\rho].
	\end{aligned}
\end{equation}
Here we used
\begin{align}\label{eq:L1}
    \mc D_{l,1}\coloneqq D_l[\cdot]D_l^\dag,\ \mc D_{l,2}\coloneqq D_l^\dag D_l[\cdot]\mc I,\ \mc D_{l,3}\coloneqq\mc I[\cdot]D_l^\dag D_l,\ \forall\, l\in[m],
\end{align}
and denotes coefficients in the series by $\{b_{j_1,\cdots,j_k}^{(k)}\}$.
We find the operation in Eq.~\eqref{eq:J} a first-order approximation given small $\tau$, which serves as a good coarse-grained step:
\begin{equation}\label{eq:J_def_app}
	\mc J_{l}(\tau)[\rho]\coloneqq\Tr_a[\mathrm{e}^{-\ii J_l\sqrt{\tau}} [\ket{0}\bra{0}_a\otimes \rho] \mathrm{e}^{\ii J_l\sqrt{\tau}} ] = \rho + \tau\left(D_l\rho D_l^\dag-\frac{1}{2}\{D_l^\dag D_l,\rho\}\right) +\mc O(\tau^2).
\end{equation}

To further improve the approximation from the coarse simulation $\mc J_{l}(\tau)$, we insert an additional superoperator $\mc N_{l}(\tau)$ to compensate for the discrepancies such that
\begin{align}\label{eq:N_def}
    \mc N_{l}(\tau)\mc J_{l}(\tau)=\mathrm{e}^{\mc D_l\tau}=\mc I+\sum_{k=1}^{\infty}\frac{\tau^k}{k!}\left(\mc D_{l,1}-\frac{1}{2}\mc D_{l,2}-\frac{1}{2}\mc D_{l,3}\right)^k
    =\mc I+\sum_{k=1}^{\infty}\tau^k\sum_{j_1,\cdots,j_k=1}^3c^{(k)}_{j_1,\cdots,j_k}\overrightarrow{\prod_{i=1}^k}\mc D_{l,j_i}.
\end{align}
Here $\{c^{(k)}_{j_1,\cdots,j_k}\}$ represent coefficients in the series of $\mathrm{e}^{\mc D_l\tau}$.
Since both $\mc J_{l}(\tau)$ and $\mathrm{e}^{\mc D_l\tau}$ can be decomposed on $\{\mc D_{l,j}\}_{j=1}^3$, we can represent this compensation as
\begin{align}\label{eq:N_exp}
    \mc N_{l}(\tau)\coloneqq\mc I+\sum_{k=1}^{\infty}\tau^k\sum_{j_1,\cdots,j_k=1}^3a^{(k)}_{j_1,\cdots,j_k}\overrightarrow{\prod_{i=1}^k}\mc D_{l,j_i}=\sum_{k=0}^\infty \tau^k\mc N_{l,k}.
\end{align}
Consequently, we can recursively determine the desired coefficients $\{a\}$ as.
\begin{gather}\label{eq:recur_def}
    a^{(k)}_{j_1,\cdots,j_k}=c^{(k)}_{j_1,\cdots,j_k}-\sum_{i=1}^{k-1} a^{(k-i)}_{j_1,\cdots,j_{k-i}}\cdot b^{(i)}_{j_{k-i+1},\cdots,j_k}-b^{(k)}_{j_1,\cdots,j_k},\ \forall k\in\mathbb{N}_+.
\end{gather}
It is easy to check that all of the three series $\{a\},\{b\}$ and $\{c\}$ are composed of real numbers.
Moreover, since the structure of the superoperators fully determines these coefficients, they are independent of the choice of $l$.
According to the recursion in Eq.~\eqref{eq:recur_def}, we can show a geometrically scaling upper bound on the 1-norm $a_k$ of the coefficients in the order of $k$.
\begin{lemma}\label{lm:geom}
    The 1-norm of the coefficients $\{a_{j_1,\cdots,j_k}^{(k)}\}$ in each order of $k$, $a_k\coloneqq\sum_{j_1,\cdots,j_{k_0}=1}^3\lvert a_{j_1,\cdots,j_{k_0}}^{(k_0)}\rvert$, is bounded by a geometric series with some constant ratio $\lambda>1$.
\end{lemma}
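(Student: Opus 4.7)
The plan is to bound the nonnegative sequence $\{a_k\}$ by a majorizing sequence whose generating function can be computed in closed form, then read off a geometric growth rate from the radius of convergence.

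Writing $b_k\coloneqq\sum_{j_1,\dots,j_k}\lvert b^{(k)}_{j_1,\dots,j_k}\rvert$ and $c_k$ analogously, the first step is to take absolute values in the recursion Eq.~\eqref{eq:recur_def} and sum over multi-indices. Because the two factors $a^{(k-i)}_{j_1,\dots,j_{k-i}}$ and $b^{(i)}_{j_{k-i+1},\dots,j_k}$ act on disjoint index blocks, the inner sum decouples into a product, yielding the scalar inequality
\begin{align*}
a_k \le c_k + b_k + \sum_{i=1}^{k-1} a_{k-i}\,b_i.
\end{align*}
The ingredients $b_k$ and $c_k$ can be read off from expansions already in the text: Eq.~\eqref{eq:N_def} identifies $c^{(k)}$ as the multinomial expansion of $\frac{1}{k!}(\mc D_{l,1}-\tfrac{1}{2}\mc D_{l,2}-\tfrac{1}{2}\mc D_{l,3})^k$, so $c_k=2^k/k!$, while the explicit series for $\mc J_l(\tau)$ in Eq.~\eqref{eq:J} combined with $\sum_{j=0}^{k}\binom{2k}{2j}=\sum_{j=0}^{k-1}\binom{2k}{2j+1}=2^{2k-1}$ gives $b_k=4^k/(2k)!$.

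Next I would introduce the majorizing sequence $\tilde a_k$ defined by replacing the inequality above with equality (and $\tilde a_0 = 0$), so that $a_k\le\tilde a_k$ follows by a straightforward induction. Its generating function $\tilde A(\tau)=\sum_{k\ge 1}\tilde a_k\tau^k$ then satisfies $\tilde A = C + B + \tilde A\,B$ where $C(\tau)=\mathrm{e}^{2\tau}-1$ and $B(\tau)=\cosh(2\sqrt{\tau})-1$, hence
\begin{align*}
\tilde A(\tau) = \frac{\mathrm{e}^{2\tau}+\cosh(2\sqrt{\tau})-2}{2-\cosh(2\sqrt{\tau})}.
\end{align*}
The numerator is entire, while the denominator first vanishes at $\tau_0=\tfrac{1}{4}[\log(2+\sqrt{3})]^2\approx 0.433$ (where $\cosh(2\sqrt{\tau_0})=2$), and monotonicity of $\cosh$ on $[0,\infty)$ shows $1-B(\tau)>0$ on $(0,\tau_0)$. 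Hence $\tilde A$ is analytic on the disk $\lvert\tau\rvert<\tau_0$, and Cauchy's estimate on Taylor coefficients yields $\tilde a_k\le C_\lambda\lambda^k$ for any $\lambda>1/\tau_0\approx 2.31$; in particular $\lambda>1$, and the lemma follows since $a_k\le\tilde a_k$.

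The main obstacle I anticipate is the very first step: one must justify carefully that the triangle-inequality bound on Eq.~\eqref{eq:recur_def} upgrades to the coefficient-wise convolution at the level of 1-norms, which requires noting that the inner multi-index $(j_{k-i+1},\dots,j_k)$ can be summed independently of the outer block $(j_1,\dots,j_{k-i})$. Everything beyond that is mechanical: identifying the closed forms of $B(\tau)$ and $C(\tau)$, inverting the recursion at the level of generating functions, and invoking a standard analyticity-radius argument.
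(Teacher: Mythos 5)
Your proposal is correct, and it reaches the same scalar convolution inequality $a_k \le c_k + b_k + \sum_{i=1}^{k-1}a_{k-i}b_i$ (obtained from Eq.~\eqref{eq:recur_def} by the triangle inequality and the decoupling of the two index blocks) that underlies the paper's argument, but from there the two proofs genuinely diverge. The paper closes the bound by a direct induction on the explicit ansatz $a_k\le\lambda^{k-2}a_2$, estimating the convolution term order by order and checking that the induction survives for $\lambda>2.5$. You instead compute the exact 1-norms $c_k=2^k/k!$ and $b_k=4^k/(2k)!$ (both check out against Eqs.~\eqref{eq:N_def} and~\eqref{eq:J}), solve the majorizing recursion in closed form as $\tilde A=(C+B)/(1-B)$ with $C(\tau)=\mathrm{e}^{2\tau}-1$ and $B(\tau)=\cosh(2\sqrt{\tau})-1$, and read off the growth rate from the first singularity at $\tau_0=\tfrac14[\log(2+\sqrt3)]^2$. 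This buys a cleaner, less ad hoc argument and a slightly sharper ratio ($\lambda>1/\tau_0\approx2.31$ versus the paper's $2.5$), at the cost of a little analytic machinery; the paper's induction is more elementary and self-contained. One detail to tighten in your version: Cauchy's estimate requires $1-B(\tau)\neq0$ on the complex disk $\lvert\tau\rvert<\tau_0$, whereas you only verified positivity on the real segment. The fix is one line: since $B$ has nonnegative Taylor coefficients, $\lvert B(\tau)\rvert\le B(\lvert\tau\rvert)<B(\tau_0)=1$ on the open disk. Alternatively you can avoid complex analysis altogether, because the $\tilde a_k$ are nonnegative, so convergence of $\tilde A$ at any real $\tau\in(0,\tau_0)$ already gives $\tilde a_k\le\tilde A(\tau)\,\tau^{-k}$. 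With either patch the proof is complete.
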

\begin{proof}
    According to a simple calculation, the norms of the first four orders are 1, 0, 1.5833, and 2.0472, respectively.
    We can make an inductive assumption that there exists an order-independent constant $\lambda\geq1$ such that
    \begin{gather}
        a_k\leq \lambda^{k-2}\cdot a_{2},\ \forall\,3\leq k< k_0,\,k\in\mathbb{N}_+.
    \end{gather}
    To complete this induction, we show that this bound also holds for $k=k_0$.
    \begin{align}
        a_{k_0}\coloneqq\sum_{j_1,\cdots,j_{k_0}=1}^3\lvert a_{j_1,\cdots,j_{k_0}}^{(k_0)}\rvert\leq&\sum_{j_1,\cdots,j_{k_0}=1}^3\lvert c_{j_1,\cdots,j_{k_0}}^{(k_0)}\rvert+\sum_{j_1,\cdots,j_{k_0}=1}^3\lvert b_{j_1,\cdots,j_{k_0}}^{(k_0)}\rvert+\sum_{i=1}^{k_0-1}\max(b^{(k_0-i)})\cdot\sum_{j_1,\cdots,j_i}\lvert a_{j_1,\cdots,j_i}^{(i)}\rvert\notag\\
        \leq& \frac{2^{k_0}}{k_0!}+\frac{2k_0}{(k_0!)^2}+\sum_{i=1}^{k_0-1}\frac{\lambda^{i-2}}{(k_0-i)!^2}a_2
        \leq 1+\sum_{x=1}^{k_0-1}\frac{\lambda^{k_0-2}}{x!^2\cdot\lambda^x}a_2\leq 2(\mathrm{e}^{\frac{1}{\lambda}}-1)\cdot\lambda^{k_0-2}a_2,
    \end{align}
    where the second inequality comes from the definition in Eq.~\eqref{eq:J} and the induction assumption.
    Therefore, it is easy to see that this induction holds for all $\lambda>2.5$. 
    Note that $\lambda$ is independent of the choice of $l$.
\end{proof}
The compensation superoperator $\mc N_{l}(\tau)$ is in general not a CPTP map, which renders it impossible to directly implement it using quantum operations.
Fortunately, we discovered that $\mc N_{l}(\tau)$ can be represented by a Pauli-conjugate LCS formula and thus be realized by sampling.
\begin{proposition}\label{prop:N_LCS}
    Suppose the time length is small enough that $\tau\leq1/(2\lambda\|D_l\|_1^2)$.
    The compensation $\mc N_{l}(\tau)$ in Eq.~\eqref{eq:N_exp} can be represented as a $(\mu,\epsilon)$-LCS formula on Pauli-conjugate basis such that 
    \begin{align*}
        \mu\leq\mathrm{e}^{2a_2\|D_l\|_1^4\tau^2},\hspace{0.5em}\text{and}\hspace{0.5em}
        \epsilon\leq2a_2\lambda^{K_l-1}\|D_l\|_1^{2K_l+2}\tau^{K_l+1},\notag
    \end{align*}
    where $a_2$ and $\lambda$ are defined as in Lemma~\ref{lm:geom}. 
    The gate count is $\order{K_l}$ using one ancilla qubit.
\end{proposition}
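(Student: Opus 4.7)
The plan is to build the Pauli-conjugate LCS formula by expanding the three elementary dissipators $\mc D_{l,1}, \mc D_{l,2}, \mc D_{l,3}$ on the Pauli-conjugate basis, lifting that decomposition to the compositions appearing in $\mc N_{l,k}$, and then summing the truncated power series using Lemma~\ref{lm:geom}. First I would use the Pauli decomposition $D_l = \sum_\alpha d_\alpha^{(l)} P_\alpha$ to expand each $\mc D_{l,j}$ as a sum of terms $P \rho Q^\dag$ (treating the identity factor in $\mc D_{l,2}$ and $\mc D_{l,3}$ as the trivial Pauli). Pairing conjugate terms as in Eq.~\eqref{eq:HP_decomp} converts each block into a Pauli-conjugate LCS of 1-norm $\|D_l\|_1^2$.

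Next I would compose these formulas along each string $\mc D_{l,j_1} \circ \cdots \circ \mc D_{l,j_k}$. Because the product of Pauli operators is itself a single Pauli (up to a phase absorbed into $\phi$), the composition stays on the Pauli-conjugate basis, and Lemma~\ref{lm:concate} gives 1-norm at most $\|D_l\|_1^{2k}$. Summing against the coefficients $a_{j_1,\ldots,j_k}^{(k)}$ from Eq.~\eqref{eq:N_exp} then yields a Pauli-conjugate LCS for each $\mc N_{l,k}$ of 1-norm at most $a_k \|D_l\|_1^{2k}$. For the low-order terms I would also observe that $\mc N_{l,0} = \mc I$ and $\mc N_{l,1} = 0$: matching the $\order{\tau}$ terms of $\mc N_l(\tau) \mc J_l(\tau) = \mathrm{e}^{\mc D_l \tau}$, together with $\mc J_l(\tau) = \mc I + \tau \mc D_l + \order{\tau^2}$ from Eq.~\eqref{eq:J_def_app}, fixes both.

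At this point the remaining estimates reduce to geometric-series calculations. Plugging in $a_k \le \lambda^{k-2} a_2$ from Lemma~\ref{lm:geom}, the 1-norm of $\tilde{\mc N}_l(\tau) = \sum_{k=0}^{K_l} \tau^k \mc N_{l,k}$ is at most $1 + a_2 \|D_l\|_1^4 \tau^2 \sum_{k \ge 0} (\lambda \|D_l\|_1^2 \tau)^k$, and the truncation tail $\|\sum_{k \ge K_l + 1} \tau^k \mc N_{l,k}\|_\diamond$ is at most $a_2 \lambda^{-2} \sum_{k \ge K_l + 1} (\lambda \|D_l\|_1^2 \tau)^k$. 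The hypothesis $\tau \le 1/(2\lambda \|D_l\|_1^2)$ caps the geometric ratio by $1/2$, so each sum is bounded by twice its leading term; combined with $1 + x \le \mathrm{e}^{x}$ this delivers the claimed $\mu \le \mathrm{e}^{2 a_2 \|D_l\|_1^4 \tau^2}$ and $\epsilon \le 2 a_2 \lambda^{K_l - 1} \|D_l\|_1^{2 K_l + 2} \tau^{K_l + 1}$.

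Finally, each sampled term is a Pauli conjugate $P_{\vec\alpha} [\cdot] P_{\vec\beta}^\dag$ with $|\vec\alpha|, |\vec\beta| \le K_l$, which collapses to a single Pauli on each side and is implementable by the circuit of Fig.~\ref{fig:LCS} with one ancilla qubit and $\order{K_l}$ controlled single-qubit gates. The main obstacle here is controlling the combinatorial blow-up of the strings $\vec j$ and multi-indices $\vec\alpha, \vec\beta$ defining $\mc N_{l,k}$, which a priori could produce a super-exponential 1-norm; the geometric growth bound of Lemma~\ref{lm:geom} is precisely what rescues the argument, and every subsequent step is a routine series manipulation under the stated time-step smallness assumption.
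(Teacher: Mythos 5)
Your route is the paper's route — Pauli process-matrix decomposition of the elementary dissipators, composition along the strings $\vec j$, the bound $a_k\le\lambda^{k-2}a_2$ from Lemma~\ref{lm:geom}, and geometric-series summation under $\lambda\|D_l\|_1^2\tau\le1/2$ — and the quantitative bookkeeping ($\mu(\mc N_{l,k})\le a_k\|D_l\|_1^{2k}$, $\mu\le1+2a_2\|D_l\|_1^4\tau^2\le\mathrm{e}^{2a_2\|D_l\|_1^4\tau^2}$, tail bounded by twice its leading term) matches exactly.

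There is one step that does not work as literally stated: you claim that ``pairing conjugate terms as in Eq.~\eqref{eq:HP_decomp} converts each block into a Pauli-conjugate LCS.'' Eq.~\eqref{eq:HP_decomp} applies only to Hermitian-preserving maps, and while $\mc D_{l,1}=D_l[\cdot]D_l^\dag$ is HP, the one-sided blocks $\mc D_{l,2}=D_l^\dag D_l[\cdot]\mc I$ and $\mc D_{l,3}=\mc I[\cdot]D_l^\dag D_l$ are not (their $\chi$ matrices are supported on a single row or column and are not Hermitian); symmetrizing either one individually would produce $\tfrac12(\mc D_{l,2}+\mc D_{l,3})$, not the block itself. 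The Hermitization has to be performed at the level of the whole order-$k$ term: since $\mc D_{l,j}^\dag=\mc D_{l,\pi(j)}$ with $\pi=(1)(23)$, and the recursion~\eqref{eq:recur_def} forces $a^{(k)}_{\pi(j_1),\dots,\pi(j_k)}=a^{(k)}_{j_1,\dots,j_k}$, each string $\vec j$ can be paired with its image $\pi(\vec j)$ carrying the same real coefficient, and only then does the pair collapse into genuine Pauli conjugates $\mc V_{\phi,\vec\alpha,\vec\beta}$. This is the content of Eq.~\eqref{eq:N_HP} in the paper; it costs nothing in the 1-norm or error estimates, so the rest of your argument goes through unchanged, but without it the claimed Pauli-conjugate basis representation of $\mc N_{l,k}$ is not justified.
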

\begin{proof}

Since the compensation $\mc N_{l}(\tau)$ is defined on the basis of $\{\mc D_{l,i}\}$,
we begin by consider the Pauli process representations of $\{\mc D_{l,i}\}$
\begin{align}\label{eq:L1_chi}
    \mc D_{l,i}=\sum_{\alpha}\chi_{\alpha\beta}^{(l,i)}P_\alpha[\cdot]P_\beta^\dag,\ \forall\, l\in[m],\,i\in[3],
\end{align}
We can bound the $\chi$ matrix by the Pauli 1-norm as
\begin{gather}\label{eq:1normL1}
    \abs{\chi^{(l,i)}}\leq\|D_l\|_1^2,\ \forall\, l\in[m],\,i\in[3].
\end{gather}
Based on the observation
\begin{gather}
    \mc D_{l,1}[A]^\dag=\mc D_{l,1}[A^\dag],\ \mc D_{l,2}[A]^\dag=\mc D_{l,3}[A^\dag],\ 
    \mc D_{l,3}[A]^\dag=\mc D_{l,2}[A^\dag],\ \forall\, l\in[m],
\end{gather}
we know that $\mc D_{l,j}^\dag=\mc D_{l,\pi(j)}$ where $\pi=(1)(23)\in \mathrm{S}_3$ is a permutation and $\pi^{-1}=\pi$.
Moreover, we find that
\begin{align}
    a^{(k_0)}_{\pi(j_1),\cdots,\pi(j_{k_0})}=&c^{(k_0)}_{\pi(j_1),\cdots,\pi(j_{k_0})}-\sum_{i=1}^{k_0-1}a^{(k_0-i)}_{\pi(j_1),\cdots,\pi(j_{k_0-i})}\cdot b^{(i)}_{\pi(j_{k_0-i+1}),\cdots,\pi(j_{k_0})}-b^{(k_0)}_{\pi(j_1),\cdots,\pi(j_{k_0})}\notag\\
    =&c^{(k_0)}_{j_1,\cdots,j_{k_0}}-\sum_{i=1}^{k_0-1}a^{(k_0-i)}_{j_1,\cdots,j_{k_0-i}}\cdot b^{(i)}_{j_{k_0-i+1},\cdots,j_{k_0}}-b^{(k_0)}_{j_1,\cdots,j_{k_0}}=a^{(k_0)}_{j_1,\cdots,j_{k_0}}.
\end{align}

According to the coefficients $\{a\}$, we find that $\mc N_{l,0}=\mc I$ and $\mc N_{l,1}=0$.
Besides, we can further construct the Pauli-conjugate decomposition for higher-order $\mc N_{l,\tau}$.
\begin{align}\label{eq:N_HP}
   \mc N_{l,k}=&\sum_{j_1,\cdots,j_k=1}^3\frac{1}{2}a^{(k)}_{j_1,\cdots,j_k}\left(\overrightarrow{\prod_{i=1}^k}\mc D_{l,j_i}+\overrightarrow{\prod_{i=1}^k}\mc D_{l,\pi(j_i)}\right)\notag\\
    =&\sum_{j_1,\cdots,j_k=1}^3\frac{1}{2}a^{(k)}_{j_1,\cdots,j_k}\left(\overrightarrow{\prod_{i=1}^k}\sum_{\alpha,\beta}\chi^{(l,j_i)}_{\alpha\beta}P_{\alpha}[\cdot]P_{\beta}^\dag+\overrightarrow{\prod_{i=1}^k}\sum_{\alpha,\beta}\chi^{(l,j_i)\star}_{\alpha\beta}P_{\beta}[\cdot]P_{\alpha}^\dag\right)\notag\\
    =&\sum_{\Vec{j}\in\mathbb{F}_3^k}\abs{a^{(k)}_{\Vec{j}}}\left(\prod_{i=1}^k\abs{\chi^{(l,j_i)}}\right)\Bigg(\sum_{\Vec{\alpha},\Vec{\beta}}\Pr(\Vec{\alpha},\Vec{\beta}\,|\,l,\Vec{j})\frac{1}{2}\Big(\exp(\ii\phi^{l,\Vec{j}}_{\Vec{\alpha},\Vec{\beta}}) P_{\Vec{\alpha}}[\cdot]P_{\Vec{\beta}}^\dag+\exp(-\ii\phi^{l,\Vec{j}}_{\Vec{\alpha},\Vec{\beta}}) P_{\Vec{\beta}}[\cdot]P_{\Vec{\alpha}}^\dag\Big)\Bigg)\notag\\
    =&\mu(\mc N_{l,k})\sum_{\Vec{j}\in\mathbb{F}_3^k}\Pr(\Vec{j}\,|\,l,k)\sum_{\Vec{\alpha},\Vec{\beta}}\Pr(\Vec{\alpha},\Vec{\beta}\,|\,l,\Vec{j},k)\mc V_{(l,\Vec{j}),\Vec{\alpha},\Vec{\beta}},
\end{align}
where we recruit $\Vec{j}\coloneqq(j_1,\cdots,j_k)$, $\Vec{\alpha}\coloneqq(\alpha_{1},\cdots,\alpha_{k})$, $P_{\Vec{\alpha}}\coloneqq P_{\alpha_{1}}\cdots P_{\alpha_{k}}$, $\Vec{\beta}\coloneqq(\beta_{1},\cdots,\beta_{k})$, and $P_{\Vec{\beta}}\coloneqq P_{\beta_{1}}\cdots P_{\beta_{k}}$.
The overall phase $\phi^{l,\Vec{j}}_{\Vec{\alpha},\Vec{\beta}}$ is the summation of the phases from all $\{\chi^{(l,j_i)}_{\alpha_{i}\beta_{i}}\}_i$ coefficients and the phase from the sign of $a_{\Vec{j}}^{(k)}$. 
We use $\mc V_{(l,\Vec{j}),\Vec{\alpha},\Vec{\beta}}$ to denote corresponding Pauli conjugates.
The 1-norm $\mu(\mc N_{l,k})\coloneqq\sum_{\Vec{j}\in\mathbb{F}_3^k}\abs{a^{(k)}_{\Vec{j}}}\prod_{i=1}^k\abs{\chi^{(l,j_i)}}$.
The probability satisfies $\Pr(\Vec{j}\,|\,l,k)\coloneqq\abs{a^{(k)}_{\Vec{j}}}(\prod_{i=1}^k\abs{\chi^{(l,j_i)}})/\mu(\mc N_{l,k})$.
The conditional probability $\Pr(\Vec{\alpha},\Vec{\beta}\,|\,l,\Vec{j},k)$ satisfies
\begin{align}
    \Pr(\Vec{\alpha},\Vec{\beta}\,|\,l,\Vec{j},k)=\prod_{i=1}^k\Pr(\alpha_{j_i},\beta_{j_i}\,|\,l,j_i)=\prod_{i=1}^k\frac{\abs{\chi_{\alpha_{j_i}\beta_{j_i}}^{(l,j_i)}}}{\abs{\chi^{(l,j_i)}}}.
\end{align}

To derive the overall compensation, we can simply take the summation over all $k$-order formulas.
Nevertheless, we always need to truncate the series of $\mc N_{l}(\tau)$ to a fixed order $K_l$ for practice, which we denote by $\tilde{N}_{l}(\tau)$.
According to the recursive solutions of $a$, we know that $\mc N_{l,0}=\mc I$ and $\mc N_{l,1}=0$.
Therefore, we get the expansion of $\tilde{N}_{l}(\tau)$ as
\begin{align}\label{eq:LCS_N}
    \tilde{\mc N}_{l}(\tau)\coloneqq\sum_{k=0}^{K_l}\tau^k\mc N_{l,k}
    =\mu(\tilde{\mc N}_{l}(\tau))\left(\Pr(0\,|\,l)\mc I+\sum_{k=2}^{K_l}\Pr(k\,|\,l)\frac{\mc N_{l,k}}{\mu(\mc N_{l,k})}\right),
\end{align}
where the 1-norm is
\begin{align}\label{eq:1-norm_N}
    \mu(\tilde{\mc N}_{l}(\tau))\coloneqq&1+\sum_{k=2}^{K_l}\tau^k\mu(\mc N_{l,k})=1+\sum_{k=2}^{K_l}\sum_{j_1\cdots j_k=1}^3|a_{j_1\cdots j_k}^{(k)}|\left(\prod_{i=1}^k\chi^{(l,j_i)}\right)\tau^k\leq 1+\sum_{k=2}^{K_l}\lambda^{k-2}a_2\|D_l\|_1^{2k}\tau^k\notag\\
    \leq&1+\frac{a_2\|D_l\|_1^4\tau^2}{1-\lambda\|D_l\|_1^2\tau}\leq 1+2a_2\|D_l\|_1^4\tau^2\leq\mathrm{e}^{2a_2\|D_l\|_1^4\tau^2},
\end{align}
where the first inequality comes from Lemma~\ref{lm:geom} and Eq.~\eqref{eq:1normL1}, and the third comes from the constraint of a small $\tau$.
The probabilities thus satisfy $\Pr(0\,|\,l)=1/\mu(\tilde{\mc N}_{l,\tau})$ and $\Pr(k\,|\,l)=\tau^k\mu(\mc N_{l,k})/\mu(\tilde{\mc N}_{l}(\tau))$.

To quantify the bias of this formula toward the ideal $\mc N_{l}(\tau)$, we use the diamond-norm distance:
\begin{align}
    \left\|\mc N_{l}(\tau)-\tilde{\mc N}_{l}(\tau) \right\|_\diamond&=\left\|\sum_{k=K_l+1}^\infty \tau^k\mc N_{l,k}\right\|_\diamond\leq\sum_{k=K_l+1}^\infty\tau^k\|\mc N_{l,k}\|_\diamond\notag\\
    &\leq \sum_{k=K_l+1}^\infty\tau^k\mu(\mc N_{l,k})\sum_{\Vec{j}\in\mathbb{F}_3^k}\Pr(\Vec{j}\,|\,l,k)\sum_{\Vec{\alpha},\Vec{\beta}}\Pr(\Vec{\alpha},\Vec{\beta}\,|\,l,\Vec{j},k)\|\mc V_{(l,\Vec{j})),\Vec{\alpha},\Vec{\beta}}\|_\diamond\notag\\
    &=\sum_{k=K_l+1}^\infty\tau^k\mu(\mc N_{l,k})\leq\frac{a_2\lambda^{K_l-1}\|D_l\|_1^{2K_l+2}\tau^{K_l+1}}{1-\lambda\|D_l\|_1^2\tau}\leq2a_2\lambda^{K_l-1}\|D_l\|_1^{2K_l+2}\tau^{K_l+1}.
\end{align}

To implement this $\tilde{\mc N}_l(\tau)$, we can use the circuit in Sec.~\ref{sec:LCS}.
Given that $\tilde{\mc N}_l(\tau)$ is an LCS formula with the Pauli-conjugate basis, the circuit requires one ancilla qubit and two control-Pauli gates.
Moreover, the Pauli terms in the LCS can achieve the locality of $\order{K_N}$, which implies $\order{K_N}$ elementary gates. 
\end{proof}

\subsection{High-Accuracy Hamiltonian Simulations}\label{sec:H}
Here, we focus on simulating the Hamiltonian evolution.
As illustrated previously, we will use a two-stage approach to achieve this simulation.
For the coarse-grained simulation, we choose the Trotter-Suzuki method, which stands as the most accessible approach to near-term quantum devices.
For simplicity, we denote the $p$th-order Trotter formula unitary by $S_p(\tau)$.

Even though the coarse-grained stage is accessible for practical implementation, it will incur some systematic errors as a trade-off.
Specifically, we can represent the corresponding error of $S_p(\tau)$ in a multiplicative manner:
\begin{gather}
    V_p(\tau)S_p(\tau)=\mathrm{e}^{-\ii H\tau}\Rightarrow V_p(\tau)=\mathrm{e}^{-\ii H\tau}S_p(\tau)^\dag.
\end{gather}
In~\cite{zeng2022simple}, the authors explicitly analyzed the unitary compensation $V_p(\tau)$.
They decompose the compensation on the Pauli-rotation basis.

We follow this approach but rephrase it in the superoperator representation.
In the following proposition, we construct a Pauli-rotation-conjugate LCS formula for comepnsating the Trotter error. 
\begin{proposition}\label{prop:H_simulation}
    Consider a Hamiltonian $H$ and a short time $\tau<1/(2\|H\|_1)$.
    The compensation operation $\mc V_p(\tau)\coloneqq V_p(\tau)[\cdot]V_p(\tau)^\dag$ can be constructed as a $(\mu,\epsilon)$-LCS formula with the Pauli-rotation-conjugate basis such that     \begin{align*}
        \mu\leq\mathrm{e}^{2c(2\|H\|_1\tau)^{2p+2}},\hspace{0.5em}\text{and}\hspace{0.5em}
        \epsilon\leq2\left(\frac{2\mathrm{e}\|H\|_1\tau}{K_V+1}\right)^{K_V+1}+\left(\frac{2\mathrm{e}\|H\|_1\tau}{K_V+1}\right)^{2K_V+2},
    \end{align*}
    This LCS can be realized using $\order{K_V}$ elementary gates with one ancilla qubit.
\end{proposition}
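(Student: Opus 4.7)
The plan is to follow the template of Propositions~\ref{prop:trotter} and~\ref{prop:N_LCS}: expand $V_p(\tau)$ as a power series in $\tau$, truncate at order $K_V$, and repackage the truncation $\tilde V_p(\tau)$ as a Pauli-rotation-conjugate LCS formula through its bilinear conjugation. First I would write $V_p(\tau) = I + \sum_{k\geq p+1}\tau^k A_k$, noting that the $p$th-order Trotter-Suzuki formula agrees with $\mathrm{e}^{-\ii H\tau}$ through order $\tau^p$, so all lower-order coefficients vanish. Multiplying the Taylor series of $\mathrm{e}^{-\ii H\tau}$ by the Pauli-rotation product $S_p(\tau)^\dag$ and comparing coefficients, each $A_k$ admits a Pauli expansion with $\sum_\alpha|a_{k,\alpha}| \leq (2\|H\|_1)^k/k!$. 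The truncation $\tilde V_p(\tau) = I+\sum_{k=p+1}^{K_V}\tau^k A_k$ then satisfies
\begin{align*}
\|V_p(\tau)-\tilde V_p(\tau)\|_\infty \leq \sum_{k=K_V+1}^\infty\frac{(2\|H\|_1\tau)^k}{k!} \leq \left(\frac{2\mathrm{e}\|H\|_1\tau}{K_V+1}\right)^{K_V+1},
\end{align*}
and the bias of the bilinear superoperator $\tilde{\mc V}_p(\tau)[\rho] \coloneqq \tilde V_p(\tau)\rho\tilde V_p(\tau)^\dag$ follows from the standard telescoping $\|\mc V_p(\tau)-\tilde{\mc V}_p(\tau)\|_\diamond \leq 2\|V_p-\tilde V_p\|_\infty + \|V_p-\tilde V_p\|_\infty^2$, yielding exactly the stated $\epsilon$.

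Expanding $\tilde{\mc V}_p(\tau)[\rho]$ bilinearly and pairing each product $P_{\vec\alpha}(\cdot)P_{\vec\beta}^\dag$ with its Hermitian conjugate $P_{\vec\beta}(\cdot)P_{\vec\alpha}^\dag$ as in the derivation preceding Eq.~\eqref{eq:HP_decomp} produces a convex mixture of Pauli-rotation-conjugate atoms $\mc V_{\phi,\alpha,\beta}$ from Eq.~\eqref{eq:UC}. The central technical step is the $1$-norm bound $\mu\leq\exp(2c(2\|H\|_1\tau)^{2p+2})$, whose crucial exponent $2p+2$ (rather than the naive $p+1$) must be extracted by exploiting unitarity $V_pV_p^\dag = I$. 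For $p+1\leq k\leq 2p+1$ the identity $A_k + A_k^\dag + \sum_{k''+k'''=k,\,k'',k'''\geq p+1} A_{k''}A_{k'''}^\dag = 0$ collapses (the bilinear sum is empty), forcing $A_k$ to be anti-Hermitian, so the single-sided contribution $\tau^k(A_k\rho+\rho A_k^\dag)$ reduces to the pure commutator $\tau^k[A_k,\rho]$. I would then absorb the identity together with all such commutators into a single unitary conjugation $\mathrm{e}^{-\ii\tilde G(\tau)}[\cdot]\mathrm{e}^{\ii\tilde G(\tau)}$ with Hermitian $\tilde G(\tau) \coloneqq \ii\sum_{k=p+1}^{2p+1}\tau^k A_k$, which is a single Pauli-rotation-conjugate atom of LCS $1$-norm exactly $1$. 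The remaining terms---bilinear contributions $\tau^{k+k'}A_k(\cdot)A_{k'}^\dag$ for $k,k'\geq p+1$ together with the higher-order single-sided pieces for $k\geq 2p+2$---all start at order $\tau^{2p+2}$, and their cumulative $1$-norm is bounded by
\begin{align*}
\Bigl(\sum_{k\geq p+1}\tfrac{(2\|H\|_1\tau)^k}{k!}\Bigr)^{\!2} \leq c\,(2\|H\|_1\tau)^{2p+2},
\end{align*}
which upon the step $1+x\leq\mathrm{e}^x$ delivers the claimed exponential bound.

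Implementation then uses the circuit of Fig.~\ref{fig:LCS} with one ancilla qubit: sample an atom $\mc V_{\phi,R_\alpha,R_\beta}$ and apply two controlled Pauli-rotation products. Each sampled $R_\alpha$ consists of at most $K_V$ Pauli rotations assembled from the Pauli coefficients of the $A_k$, giving $\order{K_V}$ elementary gates per run. The main obstacle I anticipate is precisely the bookkeeping in the second paragraph: faithfully separating the unitary-conjugation contribution from the genuine $\mathcal{O}(\tau^{2p+2})$ residuals requires propagating the unitarity identities $A_k+A_k^\dag = 0$ through the bilinear expansion order-by-order, analogous to---but more intricate than---the verification $\mc M_1 = 0$ used in the proof of Proposition~\ref{prop:trotter}.
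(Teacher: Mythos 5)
Your outer scaffolding matches the paper's: truncate $V_p(\tau)$ at order $K_V$, bound the tail by $\sum_{k>K_V}(2\|H\|_1\tau)^k/k!$, pass to the superoperator via the telescoping bound $2\epsilon_V+\epsilon_V^2$, and square the $1$-norm. Where you diverge is the crucial $2p+2$ exponent in $\mu$: the paper does not derive it at all, but simply invokes Proposition~8 of~\cite{zeng2022simple}, which already supplies the truncated LCU decomposition $\tilde V_p(\tau)=\mu_V\sum_i\Pr(i)U_i$ into products of Pauli rotations with $\mu_V\leq\mathrm{e}^{c(2\|H\|_1\tau)^{2p+2}}$; the proposition then follows in three lines by squaring. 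You instead attempt to reconstruct that cited result from unitarity, and your mechanism (anti-Hermiticity of $A_k$ for $p+1\leq k\leq 2p+1$, so the leading single-sided terms are commutators that can be absorbed into a unitary conjugation at $1$-norm cost $1$) is indeed the right idea behind the $2p+2$ exponent.

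However, as written your reconstruction has a concrete gap at its central step. The object $\mathrm{e}^{-\ii\tilde G(\tau)}[\cdot]\mathrm{e}^{\ii\tilde G(\tau)}$ with $\tilde G(\tau)=\ii\sum_{k=p+1}^{2p+1}\tau^kA_k$ is a unitary conjugation of diamond norm $1$, but it is \emph{not} a Pauli-rotation-conjugate atom: $\tilde G$ is a generic multi-Pauli Hermitian operator, so $\mathrm{e}^{-\ii\tilde G}$ cannot be applied with the two controlled Pauli-rotation gates of Fig.~\ref{fig:LCS}, and implementing it exactly would itself require Hamiltonian simulation --- circular for this proposition. To land in the stated basis you must further split $\mathrm{e}^{-\ii\tilde G}$ into a product of single-Pauli rotations $\prod_\alpha\mathrm{e}^{-\ii g_\alpha P_\alpha}$ and fold the splitting error, which is $\order{\|\tilde G\|_1^2}=\order{\tau^{2p+2}}$, into the residual; similarly the $\order{\tilde G^2}$ corrections from expanding $\mathrm{e}^{-\ii\tilde G}[\cdot]\mathrm{e}^{\ii\tilde G}$ beyond the commutator, and the non-anti-Hermitian single-sided terms at orders $k\geq 2p+2$, are not covered by the squared-tail bound you quote for the bilinear pieces alone. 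All of these contributions are of the right order $\tau^{2p+2}$, so the argument is salvageable, but the bookkeeping you flag as ``the main obstacle'' is precisely the content of the result the paper outsources to~\cite{zeng2022simple}; without carrying it out, the $1$-norm bound is asserted rather than proved.
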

\begin{proof}
    According to the definition of the LCU formula and Proposition 8 in~\cite{zeng2022simple}, we know that the unitary compensation $V_p(\tau)$ can be approximated by truncating up to its $K_V$th-order terms $\tilde{V}_{p}(\tau)$.
\begin{gather}
    \epsilon_V\coloneqq\|\tilde{V}_{p}(\tau)-\mathrm{e}^{-\ii H\tau}S_p(\tau)^\dag\|\leq\left(\frac{2\mathrm{e}\|H\|_1\tau}{K_V+1}\right)^{K_V+1}.
\end{gather}
Moreover, we can represent compensation as a convex combination of:
    \begin{align}
        \tilde{V}_{p}(\tau)=&\mu_V\sum_{i}\Pr(i)U_i,
    \end{align}
    where $\{\Pr(i)\}$ are probabilities and $\{U_i\}$ are Pauli-rotation unitaries. The normalization factor satisfies
        \begin{align}
        \mu_V\leq\mathrm{e}^{c(2\|H\|_1\tau)^{2p+2}}.
    \end{align}
    Therefore, we can represent the corresponding superoperator, $\tilde{\mc V}_p(\tau)\coloneqq\tilde{V}_{p}(\tau)[\cdot]\tilde{V}_{p}(\tau)^\dag$, as a convex combination with a normalization factor,
    \begin{align}\label{eq:V_LCS}
        \tilde{\mc V}_p(\tau)=\mu_V^2\sum_{i,j}\Pr(i)\Pr(j)U_i[\cdot]U_j^\dag=\mu_V^2\sum_{i,j}\Pr(i)\Pr(j)\cdot\frac{1}{2}(U_i[\cdot]U_j^\dag+U_j[\cdot]U_i^\dag)
    \end{align}
    This generates an LCS formula toward the $\mc V_p(\tau)$ with the normalization factor $\mu=\mu_V^2$. 
    The accuracy of this formula is quantified via the diamond norm as
    \begin{align}
        \left\|\tilde{\mc V}_p(\tau)-\mc V_p(\tau)\right\|_\diamond=&\left\|\tilde{V}_{p}(\tau)[\cdot]\tilde{V}_{p}(\tau)^\dag-\mathrm{e}^{-\ii H\tau}S_{p}(\tau)^\dag[\cdot]S_{p}(\tau)\mathrm{e}^{\ii H\tau}\right\|_\diamond\notag\\
    \leq&2\epsilon_V+\epsilon_V^2\leq2\left(\frac{2\mathrm{e}\|H\|_1\tau}{K_V+1}\right)^{K_V+1}+\left(\frac{2\mathrm{e}\|H\|_1\tau}{K_V+1}\right)^{2K_V+2}.
    \end{align}

    Since the LCS formula in Eq.~\eqref{eq:V_LCS} is a decomposition on the Pauli-rotation conjugates, the compensation $\tilde{\mc V}_p(\tau)$ can be implemented by two controlled unitaries with one ancilla qubit.
    Further compiling these two unitary gates results in $\order{K_V}$ gates due to the $\order{K_V}$ locality of the corresponding Pauli~\cite{mansky2023decomposition}.
\end{proof}

In our Lindblad simulation, we frequently use Hamiltonian evolutions as subroutines.
This compensation facilitating the high-accuracy Hamiltonian simulation also helps to implement our Lindblad simulation.

The most straightforward application is regarding the Hamiltonian simulation $\mathrm{e}^{\mc H\tau}$ in Eq.~\eqref{eq:Lind_de_app}, where we employ a first-order Trotter $\mc S_1(\tau)$ as the coarse-grained simulation and use $\tilde{\mc V}_1(\tau)$ to refine the simulation.
This design allows a high-accuracy approximation of $\mathrm{e}^{\mc H\tau}$.
By regarding $\mc S_1(\tau)$ as a $(1,0)$-LCS formula, we can get the following statement:
\begin{corollary}\label{co:F}
    Consider a $q$-sparse Hamiltonian $H$ and a short time $\tau<1/(2\|H\|_1)$.
    The concatenated superoperator, $\tilde{\mc F}(\tau)\coloneqq\tilde{\mc V}_1(\tau)\circ\mc S_1(\tau)$, is a $(\mu,\epsilon)$-LCS formula of $\mathrm{e}^{\mc H\tau}$ such that
    \begin{align*}
        \mu\leq\mathrm{e}^{2c(2\|H\|_1\tau)^{4}},\hspace{0.5em}\text{and}\hspace{0.5em}
        \epsilon\leq2\left(\frac{2\mathrm{e}\|H\|_1\tau}{K_F+1}\right)^{K_F+1}+\left(\frac{2\mathrm{e}\|H\|_1\tau}{K_F+1}\right)^{2K_F+2},
    \end{align*}
    where $K_F$ is the truncation order, and $c\leq3.5$ is a constant.
    This LCS can be realized using $\order{q+K_F}$ elementary gates with one ancilla qubit.
\end{corollary}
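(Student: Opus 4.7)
The plan is to obtain $\tilde{\mc F}(\tau)$ as the concatenation of two LCS formulas and invoke Lemma~\ref{lm:concate} to combine their parameters. First I would observe that the first-order Trotter channel $\mc S_1(\tau)[\cdot]=S_1(\tau)[\cdot]S_1(\tau)^\dag$ is a unitary conjugation, hence a CPTP map, so it trivially constitutes a $(1,0)$-LCS formula of itself on the single-element unitary-conjugate basis $\{\mc S_1(\tau)\}$. Second, by definition $V_1(\tau)S_1(\tau)=\mathrm{e}^{-\ii H\tau}$, so at the superoperator level $\mc V_1(\tau)\circ\mc S_1(\tau)=\mathrm{e}^{\mc H\tau}$; this identifies $\mathrm{e}^{\mc H\tau}$ as the target of the concatenation.

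Next, I would apply Proposition~\ref{prop:H_simulation} with $p=1$ and truncation order $K_V=K_F$ to obtain a $(\mu_V,\epsilon_V)$-LCS formula $\tilde{\mc V}_1(\tau)$ for $\mc V_1(\tau)$, where $\mu_V\le \mathrm{e}^{2c(2\|H\|_1\tau)^{4}}$ and $\epsilon_V$ matches the bias expression in the corollary. Because $\mc S_1(\tau)$ is CPTP with $\epsilon_1=0$ and $\mu_1=1$, Lemma~\ref{lm:concate} in the CPTP special case gives that $\tilde{\mc F}(\tau)=\tilde{\mc V}_1(\tau)\circ\mc S_1(\tau)$ is a $(\mu_V\cdot 1,\;0+\epsilon_V+0\cdot\epsilon_V)$-LCS formula of $\mathrm{e}^{\mc H\tau}$, which yields the claimed $\mu$ and $\epsilon$ bounds directly.

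Finally, I would handle the gate count by separately counting the two stages and adding. The first-order Trotter step $\mc S_1(\tau)$ for a $q$-sparse Hamiltonian consists of at most $q$ Pauli-exponential factors, each compiled into $\order{1}$ elementary gates, giving $\order{q}$ total. Proposition~\ref{prop:H_simulation} already supplies $\order{K_F}$ elementary gates for $\tilde{\mc V}_1(\tau)$ using a single ancilla qubit. Summing yields $\order{q+K_F}$ gates, and the single ancilla used for $\tilde{\mc V}_1(\tau)$ suffices for the whole formula since $\mc S_1(\tau)$ uses none (and in the broader algorithm this qubit can be reused as noted in Appendix~\ref{sec:reuse}).

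I do not expect a serious obstacle here, as the result is essentially a bookkeeping corollary of Proposition~\ref{prop:H_simulation} and Lemma~\ref{lm:concate}. The only subtle point is confirming that the identity $\mc V_1(\tau)\circ\mc S_1(\tau)=\mathrm{e}^{\mc H\tau}$ (rather than a version with leftover Trotter error) is exactly what the multiplicative definition $V_p(\tau)S_p(\tau)=\mathrm{e}^{-\ii H\tau}$ supplies, so that the concatenation approximates the true Hamiltonian channel and not just the Trotter step, and that the CPTP branch of Lemma~\ref{lm:concate} applies so that the $\mu_V$ factor does not amplify the bias.
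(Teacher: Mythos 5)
Your proposal is correct and follows essentially the same route the paper intends: treat $\mc S_1(\tau)$ as a $(1,0)$-LCS formula, invoke Proposition~\ref{prop:H_simulation} with $p=1$ and $K_V=K_F$, and combine via Lemma~\ref{lm:concate} (where, since $\mu_1=1$ and $\epsilon_1=0$, even the general branch already gives bias $\epsilon_V$ with no amplification). The gate-count bookkeeping and the exactness of $\mc V_1(\tau)\circ\mc S_1(\tau)=\mathrm{e}^{\mc H\tau}$ are handled as the paper does.
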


Another scenario where we need Hamiltonian simulation emerges in dissipative parts.
In that case, every dissipative term $\mc J_l(\tau)$ in Eq.~\eqref{eq:J_def_app} requires an $(n+1)$-qubit Hamiltonian simulation $\mathrm{e}^{-\ii J_l\sqrt{\tau}}[\cdot]\mathrm{e}^{\ii J_l\sqrt{\tau}}$.
Similarly, our method uses a first-order Trotter approach to fulfill the coarse-grained simulation followed by the compensation term.
Therefore, we can get the approximation:
\begin{gather}\label{eq:tilde_J}
    \tilde{\mc J}_{l}(\tau)[\rho]\coloneqq\Tr_a[\tilde{\mc V}_{1,l}(\sqrt{\tau})\circ\mc S_{1,l}(\sqrt{\tau}) [\ket{0}\bra{0}_a\otimes \rho]]
\end{gather}
This is also a valid LCS formula of $\mc J_l(\tau)$:
\begin{corollary}\label{co:J}
    Consider a short time $\tau<1/(16\|D_l\|_1^2)$ and $q$-sparse jump operator $D_l$.
    The superoperator $\tilde{\mc J}_l(\tau)$ in Eq.~\eqref{eq:tilde_J} is a $(\mu,\epsilon)$-LCS formula of $\mc J_l(\tau)$ such that
    \begin{align*}
        \mu\leq\mathrm{e}^{2c(16\|D_l\|^2_1\tau)^{2}},\hspace{0.5em}\text{and}\hspace{0.5em}
    \epsilon\leq2\left(\frac{4\mathrm{e}\|D_l\|_1\sqrt{\tau}}{K_J+1}\right)^{K_J+1}+\left(\frac{4\mathrm{e}\|D_l\|_1\sqrt{\tau}}{K_J+1}\right)^{2K_J+2},
    \end{align*}
    where $K_J$ is the truncation order, and $c\leq3.5$ is a constant.
    This LCS can be realized using $\order{q+K_J}$ elementary gates with two ancilla qubit.
\end{corollary}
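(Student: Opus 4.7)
The plan is to reduce Corollary~\ref{co:J} to Corollary~\ref{co:F} applied with Hamiltonian $J_l$ and time $\sqrt{\tau}$, and then promote the resulting LCS formula for the $(n{+}1)$-qubit unitary conjugation to an LCS formula for the partial-traced channel $\mc J_l(\tau)$. First I would bound the Pauli 1-norm and sparsity of the dilated Hamiltonian. Using $\sigma_\pm = \tfrac{1}{2}(X \mp \ii Y)$,
\begin{equation*}
J_l = \tfrac{1}{2}X\otimes(D_l+D_l^\dagger) + \tfrac{\ii}{2}Y\otimes(D_l^\dagger-D_l),
\end{equation*}
so $\|J_l\|_1 \leq 2\|D_l\|_1$ and $J_l$ has at most $2q$ non-zero Pauli terms, i.e.\ $J_l$ is $\order{q}$-sparse. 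The hypothesis $\tau < 1/(16\|D_l\|_1^2)$ is thus stronger than the precondition $\sqrt{\tau} < 1/(2\|J_l\|_1)$ of Corollary~\ref{co:F}.

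Next, I would invoke Corollary~\ref{co:F} with $H \leftarrow J_l$, $\tau \leftarrow \sqrt{\tau}$, and $K_F \leftarrow K_J$, yielding an LCS formula $\tilde{\mc E}(\sqrt{\tau}) \coloneqq \tilde{\mc V}_{1,l}(\sqrt{\tau})\circ\mc S_{1,l}(\sqrt{\tau})$ for the $(n{+}1)$-qubit superoperator $\mc E(\sqrt{\tau}) \coloneqq \mathrm{e}^{-\ii J_l\sqrt{\tau}}[\cdot]\mathrm{e}^{\ii J_l\sqrt{\tau}}$. Substituting $\|J_l\|_1 \leq 2\|D_l\|_1$ into the bounds of Corollary~\ref{co:F} gives precisely the $\mu$ and $\epsilon$ stated in Corollary~\ref{co:J}, since $(2\|J_l\|_1\sqrt{\tau})^4 \leq (16\|D_l\|_1^2\tau)^2$ and the numerator of the bias term inflates from $2\mathrm{e}\|J_l\|_1$ to $4\mathrm{e}\|D_l\|_1$.

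Finally, I would promote $\tilde{\mc E}(\sqrt{\tau})$ to an LCS formula for $\mc J_l(\tau) = \Tr_a\circ\mc E(\sqrt{\tau})\circ\mc P$, where $\mc P[\rho] \coloneqq \ket{0}\bra{0}_a\otimes\rho$. Writing $\tilde{\mc E}(\sqrt{\tau}) = \mu\sum_i\Pr(i)\mc E_i$ yields $\tilde{\mc J}_l(\tau) = \mu\sum_i\Pr(i)(\Tr_a\circ\mc E_i\circ\mc P)$, still a valid LCS with the same $\mu$ because both $\Tr_a$ and $\mc P$ are trace-norm preserving, so each composed summand has diamond norm at most $1$. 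The bias is inherited: $\|\tilde{\mc J}_l(\tau) - \mc J_l(\tau)\|_\diamond \leq \|\tilde{\mc E}(\sqrt{\tau}) - \mc E(\sqrt{\tau})\|_\diamond \leq \epsilon$. Resource counts likewise transfer: $\order{q}$ gates for the first-order Trotter of the $\order{q}$-sparse $J_l$ plus $\order{K_J}$ for the LCS compensation, totalling $\order{q + K_J}$; two ancilla qubits suffice, one for the $J_l$ dilation and one for the controlled-unitary structure of $\tilde{\mc V}_{1,l}$.

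The main obstacle is purely bookkeeping: verifying the $\|J_l\|_1$ bound and the sparsity after combining $D_l$ with $D_l^\dagger$, and checking that the trivial compositions with $\mc P$ and $\Tr_a$ do not inflate the LCS 1-norm or disturb the Hermiticity structure of the formula. With these in hand, everything follows from the already-established Corollary~\ref{co:F} without fresh combinatorial work.
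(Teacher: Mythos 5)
Your proposal is correct and follows the same route the paper implicitly takes: Corollary~\ref{co:J} is obtained by applying the Hamiltonian-compensation machinery of Proposition~\ref{prop:H_simulation}/Corollary~\ref{co:F} to the dilated Hamiltonian $J_l$ over time $\sqrt{\tau}$, using $\|J_l\|_1\leq 2\|D_l\|_1$ and $\order{q}$-sparsity so that $(2\|J_l\|_1\sqrt{\tau})^4\leq(16\|D_l\|_1^2\tau)^2$, and then composing with the ancilla preparation and partial trace, which are CPTP and hence leave the 1-norm and bias unchanged. The bookkeeping you flag (the $\sigma_\pm$ decomposition, the precondition $\sqrt{\tau}<1/(2\|J_l\|_1)$ matching $\tau<1/(16\|D_l\|_1^2)$, and the two-ancilla count) all checks out.
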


\subsection{Simulating Time-independent Evolutions}\label{sec:independent}

Based on the previous discussion, we have already elaborated on the construction of all parts in the simulation as well as the compensation for the Lie-Trotter formula.
Therefore, we can approximate the evolution in each time step by
\begin{equation}\label{eq:Lind_decomp_app}
\begin{aligned}
		\mathrm{e}^{\mc L\tau}=\mc M(\tau) \circ \overleftarrow{\prod_{l=1}^m}\left(\mc N_{l}(\tau) \circ \mc J_{l}(\tau)\right)\circ \mathrm{e}^{\mc H\tau}\approx\tilde{\mc M}(\tau) \circ \overleftarrow{\prod_{l=1}^m}\left(\tilde{\mc N}_{l}(\tau)\circ\tilde{\mc J}_{l}(\tau)\right)\circ \tilde{\mc F}(\tau).
\end{aligned}
\end{equation}
Consequently, we can organize the short-time simulation as Eq.~\eqref{eq:Lind_decomp_app} and repeat the process for $ T/\tau$ times to get the desired evolution with time $T$.
For simplicity, we assume that $T/\tau$ is an integer, but we can still deal with general cases by adjusting the time in the last epoch of the simulation.
We summarize the whole simulation in the Alg.~\ref{alg:simulation}, where we utilize comparable orders of truncations $K$ (or $2K$) for each LCS formula.

\begin{algorithm}[t]
  \caption{Time-independent Simulation}\label{alg:simulation}
  \KwIn{State $\rho_0$; Observable $O$; Hamiltonian $H$ and Jump operators $\{D_l\}_{l=1}^m$; Time $T$ and step length $\tau$; Round number $N$; Truncated order $K$ }
  \KwOut{The estimation of $\Tr(\mathrm{e}^{\mc LT}[\rho_0]O)$, $\tilde{\langle O\rangle}$}
  Calculate $\{\chi^{(l)}_{\alpha,\beta}\}_{l=0}^m$ as in Eq.~\eqref{eq:Lind_chi}, $\{\chi^{(l,i)}_{\alpha,\beta}\}_{l=1,i=1}^{m,3}$ as in Eq.~\eqref{eq:L1_chi}, and $\{a^{(k)}_{j_1,\cdots,j_k}\}$ as in Eq.~\eqref{eq:recur_def}\;
  Calculate 1-norms of the $K$th-order truncated LCS formulas $\{\mu(\tilde{\mc N}_{l,\tau})\}_l$ and $\mu(\tilde{\mc M}_{\tau})$ from Eq.~\eqref{eq:1-norm_N} and Eq.~\eqref{eq:1-norm_M}\;
  $\text{Ans}\leftarrow 0$\;
  \For(\tcp*[f]{$N$ rounds}){$i=1,\cdots,N$}{
    \For(\tcp*[f]{$T/\tau$ steps}){$j=1,\cdots,T/\tau$}{
      Apply the first-order Trotter of $\mathrm{e}^{\mc H\tau}$ to the state\;
      Recall Alg.~\ref{alg:random_sample} to apply the $K$th-order truncated LCS formula $\tilde{\mc V}_1(\tau)$ in $\tilde{\mc F}(\tau)$\;
      \For{$l=1,\cdots,m$}{
        Apply the first-order Trotter of $\mathrm{e}^{\mc J_l\sqrt{\tau}}$ to the state and ancilla\;
        Recall Alg.~\ref{alg:random_sample} to apply the $2K$th-order truncated LCS formula $\tilde{\mc V}_{1,l}(\tau)$ in $\tilde{\mc J}_l(\tau)$ to the state and ancilla\;
        Recall Alg.~\ref{alg:random_sample} to apply the $K$th-order truncated LCS formula $\tilde{\mc N}_{l}(\tau)$ as defined in Eq.~\eqref{eq:LCS_N} to the state
      }
      Recall Alg.~\ref{alg:random_sample} to apply the $K$th-order truncated LCS formula $\tilde{\mc M}(\tau)$ as defined in Eq.~\eqref{eq:LCS_M} to the state\;
    }
    Measure the final state $\rho$ on $O$\;
    $\text{Ans}\,+=(\mu(\tilde{\mc M}_{\tau})\prod_{l=1}^m\mu(\tilde{\mc N}_{l,\tau}))^{T/\tau}\times\Tr(\rho O)$
  }
  \Return $\text{Ans}/N$ 
\end{algorithm}

\begin{theorem}\label{thm:time-indep}
    Given a Lindbladian $\mc L$ with $q$-sparse local Hamiltonian $\mc H$ and jump operators $\{D_l\}_{l=1}^m$, a state $\rho_0$, and time $T$, execute Alg.~\ref{alg:simulation} with $\tau=\Theta\left(\frac{1}{\|\mc L\|_\p^2T}\right)$, $K=\Theta\left(\frac{\ln(1/\varepsilon)}{\ln(\|\|\mc L\|_\p T)}\right)$, and $N=\Theta\left(\frac{1}{\varepsilon^2}\log(1/\delta)\right)$ can output an estimation $\tilde{\langle O\rangle}$ satisfying
    \begin{gather}
        \abs{\tilde{\langle O\rangle}-\Tr(\mathrm{e}^{\mc LT}(\rho_0)O)}\leq\varepsilon \|O\|
    \end{gather}
    with probability $1-\delta$. 
    This algorithm requires at most two ancilla qubits at one time with a single-round gate count
    \begin{gather}
        \order{m\|\mc L\|_\p^2T^2\left(q+\frac{\ln(1/\varepsilon)}{\ln(\|\mc L\|_{\rm \p} T)}\right)}.
    \end{gather}
\end{theorem}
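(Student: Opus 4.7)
The plan is to build the proof on three pillars: per-step error control via the LCS compensations already constructed in Propositions~\ref{prop:trotter} and \ref{prop:N_LCS} and Corollaries~\ref{co:F} and \ref{co:J}; propagation of bias and $1$-norm across the $T/\tau$ time steps via the concatenation Lemma~\ref{lm:concate}; and a final Monte Carlo analysis via Hoeffding's inequality applied to the estimator returned by Alg.~\ref{alg:simulation}.

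First I would invoke the four component bounds to certify that each step of the algorithm implements an $(\mu_{\mathrm{step}},\epsilon_{\mathrm{step}})$-LCS formula of $\mathrm{e}^{\mc L\tau}$. Each component's $1$-norm is of the form $\exp[\order{\|\cdot\|^2\tau^2}]$ and its bias decays exponentially in the truncation order. With $\tau=\Theta(1/(\|\mc L\|_\p^2 T))$, so that $\|\mc L\|_\p\tau=\Theta(1/(\|\mc L\|_\p T))$, and with $K=\Theta(\log(1/\varepsilon)/\log(\|\mc L\|_\p T))$, each bias is driven below $\order{\varepsilon\tau/(mT)}$. Lemma~\ref{lm:concate} then composes the $2m+2$ ingredients per step into a single $(\mu_{\mathrm{step}},\epsilon_{\mathrm{step}})$-LCS formula with $\mu_{\mathrm{step}}=\exp[\order{\|\mc L\|_\p^2\tau^2}]$ and $\epsilon_{\mathrm{step}}=\order{\varepsilon\tau/T}$.

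Next I would apply Lemma~\ref{lm:concate} inductively across the $T/\tau$ steps. The total $1$-norm becomes $\mu_{\mathrm{total}}=\mu_{\mathrm{step}}^{T/\tau}=\exp[\order{\|\mc L\|_\p^2\tau T}]=\Theta(1)$ by the choice of $\tau$, while the total bias telescopes to $(T/\tau)\mu_{\mathrm{total}}\epsilon_{\mathrm{step}}=\order{\varepsilon}$. Because the targeted short-time evolution $\mathrm{e}^{\mc L\tau}$ is CPTP, the cross-term in Lemma~\ref{lm:concate} remains controlled, making this accumulation benign. The Monte Carlo output is then a bounded random variable of magnitude at most $\mu_{\mathrm{total}}\|O\|=\order{\|O\|}$ whose expectation matches $\Tr(\mathrm{e}^{\mc L T}[\rho_0]O)$ up to additive $\order{\varepsilon\|O\|}$, so Hoeffding's inequality delivers the advertised $N=\Theta(\log(1/\delta)/\varepsilon^2)$ sample complexity. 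For the gate count, each component uses $\order{q+K}$ elementary gates with at most two ancilla qubits (one hosting the dilated Hamiltonian $J_l$ and one reused for the unitary-conjugate circuit per Appendix~\ref{sec:reuse}); summing over $m+1$ components per step and $T/\tau=\Theta(\|\mc L\|_\p^2 T^2)$ steps yields the stated $\order{m\|\mc L\|_\p^2 T^2(q+\log(1/\varepsilon)/\log(\|\mc L\|_\p T))}$.

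The main obstacle will be the joint bookkeeping of $1$-norm amplification and bias propagation through $\order{mT/\tau}$ concatenations. If $\tau$ were any larger than $1/(\|\mc L\|_\p^2 T)$, the geometric factor $\mu_{\mathrm{step}}^{T/\tau}$ would explode and blow up the Monte Carlo variance, while if $K$ were chosen too conservatively the extra $mT/\tau$ factor in the accumulated bias could not be absorbed into the target $\varepsilon$. The proposed choices of $\tau$ and $K$ simultaneously pin $\mu_{\mathrm{total}}$ at a constant and hold $\epsilon_{\mathrm{total}}$ at $\order{\varepsilon}$, which is the delicate balance the theorem codifies.
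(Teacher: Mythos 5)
Your proposal is correct and follows essentially the same route as the paper's proof: per-step bias and $1$-norm bounds from Propositions~\ref{prop:trotter} and~\ref{prop:N_LCS} and Corollaries~\ref{co:F} and~\ref{co:J}, composition via Lemma~\ref{lm:concate} over the $T/\tau$ steps with the same choices of $\tau$ and $K$ pinning $\mu=\Theta(1)$ and total bias at $\order{\varepsilon}$, and a Hoeffding bound on the $\mu$-rescaled estimator for the sample complexity. The only cosmetic difference is that you budget each step's bias as $\order{\varepsilon\tau/T}$ up front, whereas the paper first derives $\epsilon_\tau=\order{\|\mc L\|_\p^K\tau^K}$ and then selects $K$; these are equivalent.
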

\begin{proof}
    
    Recalling Algorithm~\ref{alg:simulation}, the procedure indeed concatenates the LCS formulas for $\mc M_{\tau}$, $\{\mc N_{l,\tau}\}_l$, $\mc J_{l,\tau}$, and the Hamiltonian simulation $\mathrm{e}^{\mc H\tau}$.
    According to Lemma~\ref{lm:concate} and Eq.~\eqref{eq:Lind_decomp_app}, the resulting formula targets the short-time evolution.
    Repeating this for $T/\tau$ times returns the long-time evolution.

    As for the parameters of the resulting formula, we can multiply over all LCS formulas to determine the total 1-norm. 
    Given that $\tau=\Theta{\frac{1}{\|\mc L\|_\p^2T}}$ satisfies the conditions in Propositions~\ref{prop:trotter},~\ref{prop:N_LCS}, and Corollary~\ref{co:F} and~\ref{co:J}, we can bound the 1-norm of the overall formula by:
    \begin{align}
        \mu\coloneqq&\left(\mu(\tilde{\mc M}(\tau))\left(\prod_{l=1}^m\mu(\tilde{\mc N}_{l}(\tau))\mu(\tilde{\mc J}_{l}(\tau))\right)\mu(\tilde{\mc F}(\tau))\right)^{T/\tau}\notag\\
        \leq&\exp(\left(16(\|H\|_1+\sum_{l=1}^m\|D_l\|_1^2)^2+\sum_{l=1}^m(2a_2\|D_l\|_1^4+2c(4\|D_l\|_1)^{4})+2c(2\|H\|_1)^{4}\tau^{2}\right)\tau T)=\order{1},
    \end{align}
    where $c\leq3.5$.

    It is more complicated to calculate the overall bias of this formula. 
    We start by considering the bias of each single-term dissipative evolution.
    We take the $l$th term, $\tilde{\mc N}_{l,\tau}\tilde{\mc J}_{l,\tau}$, as an example.
    Note that $\mc J_{l,\tau}$ is CPTP, and we have
    \begin{align}
        \epsilon_{D_l}\leq\left(2\left(\frac{4\mathrm{e}\|D_l\|_1\sqrt{\tau}}{2K+1}\right)^{2K+1}+\left(\frac{4\mathrm{e}\|D_l\|_1\sqrt{\tau}}{2K+1}\right)^{4K+2}\right)\mathrm{e}^{2a_2\|D_l\|_1^4\tau^2}+2a_2\lambda^{K-1}\|D_l\|_1^{2K+2}\tau^{K+1}=\order{\|D_l\|_1^{2K+1}\tau^{K+1/2}},
    \end{align}
    where we have used $\tau=\Theta{\frac{1}{\|\mc L\|_\p^2T}}\ll1$ and $\|D_l\|_1^2\tau\ll1$.
    Since all of $\{\mc N_{l}(\tau)\mc J_{l}(\tau)\}$ and $\mathrm{e}^{\mc H\tau}$ are CPTP maps, we can further derive the bias 
    \begin{align}
        \epsilon_{H+D}\leq&(1+\epsilon_H)\prod_{l=1}^m(1+\epsilon_{D_l})-1\leq\exp(\epsilon_H+\sum_{l=1}^m\epsilon_{D_l})-1
    \end{align}
    Recall the bounds of $\epsilon_H$ and $\epsilon_M$ from Proposition~\ref{prop:H_simulation} and~\ref{prop:trotter}.
    Noting $\|\mc L\|_\p\tau\ll1$, we can bound the bias of a short-time implementation as
    \begin{align}\label{eq:step_error}
        \epsilon_{\tau}\leq&\epsilon_{H+D}\mu(\tilde{\mc M}_{\tau})+\epsilon_M
        \leq \order{\|\mc L\|_\p^K\tau^K},
    \end{align}
    where we used that $\mathrm{e}^{\mc H\tau}\circ\prod_{l=1}^m\mathrm{e}^{\mc D_l\tau}$ is a CPTP map and that $\mu(\tilde{\mc M}(\tau))\leq\mu=\order{1}$.
    Since this short-time implementation is a formula targeting a CPTP map, the bias of the overall LCS formula can be asymptotically bounded by: 
    \begin{align}
        \epsilon_T\leq (1+\epsilon_{\tau})^{T/\tau}-1
        \leq\exp(T\epsilon_\tau/\tau)-1=\order{\|\mc L\|_\p^K\tau^{K-1}T},
    \end{align}
    where we have used the fact that $\|\mc L\|_\p^K\tau^{K-1}T\ll1$.
    By choosing $K=\order{\frac{\ln(1/\varepsilon)}{\ln(\|\mc L\|_\p T)}}$ with proper constants, we can bound the bias as $\epsilon_T\leq\varepsilon/2$.
    
    The bias analyzed above can be regarded as the distance between the expectation of the output from the LCS formula. 
    Nevertheless, we still need to account for the statistical fluctuation.
    The single-shot measured value $\Tr(\rho O)$ is in the range of $[-\mu\|O\|,\mu\|O\|]$ and is sampled for $N$ rounds independently.
    Therefore, we can bound the sampling fluctuation based on the Hoeffding bound as
    \begin{align}
        \Pr(\abs{\tilde{\langle O\rangle}-\Bar{\langle O\rangle}}\geq\epsilon_s\|O\|)\leq2\exp(-\frac{N\epsilon^2_s}{2\mu^2}).
    \end{align}
    By setting $N=8\mu^2\log(2/\delta)/\varepsilon^2=\order{\log(1/\delta)/\varepsilon^2}$ and $\epsilon_s=\varepsilon/2$, this guarantees the overall error to be
    \begin{gather}
        \abs{\tilde{\langle O\rangle}-\Tr(\mathrm{e}^{(\mc H+\mc D)T}[\rho_0]O)}\leq\abs{\tilde{\langle O\rangle}-\Bar{\langle O\rangle}}+\abs{\Bar{\langle O\rangle}-\Tr(\mathrm{e}^{(\mc H+\mc D)T}[\rho_0]O)}\leq(\epsilon_s+\epsilon_T)\|O\|\leq \varepsilon\|O\|,
    \end{gather}
    with probability $1-\delta$.

    To count the gate overheads, we must notice that the whole algorithm implements the short-time formula for $T/\tau=\order{\|\mc L\|_\p^2T^2}$ times.
    By enumerating the four subroutines of LCS formulas in Alg.~\ref{alg:simulation}, we know that the short-time simulation can be realized using $\order{m(q+K)}$ elementary gates and at most two ancilla qubits.
    This comes to the overall gate count as
    \begin{gather}
        \order{m\|\mc L\|_\p^2T^2\left(q+\frac{\ln(1/\varepsilon)}{\ln(\|\mc L\|_\p T)}\right)}.
    \end{gather}
\end{proof}

\section{Simulating Time-dependent Evolutions}\label{sec:depend}
When referring to the time-dependent Lindblad equation, we consider a bounded continuous Superoperator-valued function $\mc L(t)$.
The evolution from $t=0$ to $T$ can be written in a time-ordered exponential and decomposed via \emph{Dyson series}.
\begin{gather}
    \mc T\mathrm{e}^{\int_{0}^T\mc L(t)dt}=\mc I+\sum_{k=1}^\infty\frac{1}{k!}\int_{0}^Tdt_1\int_0^{T}dt_2\cdots\int_{0}^{T}dt_k\cdot\mc T\overrightarrow{\prod_{i=1}^k}\mc L(t_i),
\end{gather}
where $\mc T$ represents time-ordered permutation.

To simulate this time-dependent process, we decompose the overall evolution into several small steps with length $\tau\ll1$.
Without loss of generality, we assume the time step we considered starts from $t=t_0$ and ends at $t=t_0+\tau$.
We can further partition this step into $M$ slices, each of which has a length of $\tau/M$.
For each slice, we take its middle superoperator as the representative, denoted by
\begin{align}\label{eq:slice}
    \mc L_j\coloneqq\mc L\left(\frac{(j-1/2)\tau}{M}+t_0\right)=\sum_{\alpha,\beta}\chi^{(\mc L_j)}_{\alpha\beta}P_\alpha(\cdot)P_\beta^\dag\ \ \ \forall\,j\in[M].
\end{align}
This piece-wise constant Lindbladian, denoted by $\mc L_M(t)$, is close to the original one,
\begin{gather}
    \|\mc L_M(t)-\mc L(t)\|_\diamond\leq\max_t\left\|\dot{\mc L}(t)\frac{\tau}{2M}\right\|_\diamond=\frac{\tau\|\dot{\mc L}\|_\diamond}{2M},\ \forall\,t\in[t_0,t_0+\tau],
\end{gather}
where $\|\dot{\mc L}\|_\diamond\coloneqq\max_t\|\dot{\mc L}(t)\|_\diamond$.
Given $\|\mc L\|_\p\tau\ll1$, this new Lindbladian approximates the short-time original evolution.
\begin{align}\label{eq:discrete_error_app}
\|\mc T\mathrm{e}^{\int_{t_0}^{t_0+\tau}\mc L_M(t)dt}-\mc T\mathrm{e}^{\int_{t_0}^{t_0+\tau}\mc L(t)dt}\|_\diamond=&\left\|\sum_{k=0}^\infty\frac{1}{k!}\int_{t_0}^{t_0+\tau} dt_1\int_{t_0}^{t_0+\tau}dt_2\cdots\int_{t_0}^{t_0+\tau}dt_k\cdot\left(\mc T\overrightarrow{\prod_{i=1}^k}\mc L_M(t_i)-\mc T\overrightarrow{\prod_{i=1}^k}\mc L(t_i)\right)\right\|_\diamond\notag\\
\leq&\sum_{k=1}^\infty\frac{1}{k!}\int_{t_0}^{t_0+\tau} dt_1\int_{t_0}^{t_0+\tau}dt_2\cdots\int_{t_0}^{t_0+\tau}dt_k\left\|\mc T\overrightarrow{\prod_{i=1}^k}\mc L_M(t_i)-\mc T\overrightarrow{\prod_{i=1}^k}\mc L(t_i)\right\|_\diamond\notag\\
\leq&\sum_{k=1}^\infty\frac{k\tau^k}{k!}\frac{\tau\|\dot{\mc L}\|_\diamond}{2M}\|\mc L\|_\diamond^{k-1}\leq\frac{\tau^2\|\dot{\mc L}\|_\diamond}{2M}\mathrm{e}^{\|\mc L\|_\p\tau}=\order{\frac{\tau^2\|\dot{\mc L}\|_\diamond}{M}}.
\end{align}

Nevertheless, we can further compromise by simulating the time-independent Lindbladian, $\mathrm{e}^{\sum_{j=1}^M\mc L_j\tau/M}$, to approximate the evolution of $\mc L_M$.
To improve the accuracy of this approximation, we further introduce a compensation part $\mc W(\tau)$.
We can expand this by the Dyson series:
\begin{align}\label{eq:W_def_app}
    \mc W(\tau)\coloneqq\mc T\mathrm{e}^{\int_{t_0}^{t_0+\tau}\mc L_M(t)dt}\circ\mathrm{e}^{-\sum_{j=1}^M\mc L_j\tau/M}=\sum_{s_1=0}^\infty\frac{\tau^{s_1}}{M^{s_1}s_1!}\mc T\left(\overrightarrow{\prod_{i=1}^{s_1}}\sum_{j_i=1}^M\mc L_{j_i}\right)\circ\mathrm{e}^{-\sum_{j=1}^M\mc L_j\tau/M}=\sum_{k=0}^\infty\tau^k\mc W_{k},
\end{align}
where $\mc T$ imposes the necessary reordering of labels $\{j_i\}_i$ to make sure the product is in the time order.
We represent it as an LCS formula to implement this compensation.
\begin{proposition}\label{prop:time_dep}
    The compensation $\mc W(\tau)$ in Eq.~\eqref{eq:W_def_app} can be represented as a $(\mu,\epsilon)$-LCS formula on Pauli-conjugate basis such that 
    \begin{gather*}
        \mu\leq\mathrm{e}^{\mathrm{e}(2\|\mc L\|\tau)^3},\text{ and }
        \epsilon\leq\left(\frac{2\mathrm{e}\|\mc L\|_\p\tau}{K_W+1}\right)^{K_W+1},
    \end{gather*}
    where $K_W$ is the truncation order.
    The gate count for this formula is $\order{K_W}$ using one ancilla qubit.
\end{proposition}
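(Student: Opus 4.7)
The plan is to follow the strategy of Proposition~\ref{prop:trotter}: expand $\mc W(\tau)$ as a convergent power series in $\tau$, substitute the Pauli decomposition of each $\mc L_j$, truncate at order $K_W$, and recognize the truncated object as a Pauli-conjugate LCS formula via the Hermiticity-preserving identity~\eqref{eq:HP_decomp}.

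Concretely, I would start from the Dyson-series expansion already written in Eq.~\eqref{eq:W_def_app} and multiply by the Taylor series of $e^{-\sum_j\mc L_j\tau/M}$ to obtain $\mc W(\tau)=\sum_{k=0}^\infty\tau^k\mc W_k$ with
\begin{equation*}
\mc W_k=\frac{1}{M^k}\sum_{s_1+s_2=k}\frac{(-1)^{s_2}}{s_1!\,s_2!}\,\mc T\Big(\sum_j\mc L_j\Big)^{s_1}\Big(\sum_j\mc L_j\Big)^{s_2}.
\end{equation*}
Inserting $\mc L_j=\sum_{\alpha\beta}\chi^{(\mc L_j)}_{\alpha\beta}P_\alpha[\cdot]P_\beta^\dag$ and symmetrizing by Eq.~\eqref{eq:HP_decomp} turns $\tilde{\mc W}(\tau)\coloneqq\sum_{k\le K_W}\tau^k\mc W_k$ into a Pauli-conjugate LCS formula, which the circuit of Fig.~\ref{fig:LCS} realizes with one ancilla qubit and $\order{K_W}$ elementary gates (each sampled Pauli string has locality at most $K_W$).

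The bias bound is routine: the triangle inequality together with $\abs{\chi^{(\mc L_j)}}\le\|\mc L\|_\p$ and $|\chi^{\sum_j\mc L_j}|\le M\|\mc L\|_\p$ yields $\mu(\mc W_k)\le(2\|\mc L\|_\p)^k/k!$ after the $1/M^k$ cancels, so the truncation tail at order $K_W+1$ is bounded by $\sum_{k>K_W}(2\|\mc L\|_\p\tau)^k/k!\le(2e\|\mc L\|_\p\tau/(K_W+1))^{K_W+1}$, matching the stated $\epsilon$.

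The hard part is the 1-norm: the same naive estimate only yields $\mu\le e^{2\|\mc L\|_\p\tau}$, whereas the claim $\mu\le e^{e(2\|\mc L\|_\p\tau)^3}$ requires exploiting the algebraic cancellations built into $\mc W(\tau)$. Inspection gives $\mc W_0=\mc I$; both factors agree at first order so $\mc W_1=0$; and a direct expansion produces $\mc W_2=\frac{1}{2M^2}\sum_{j<j'}[\mc L_{j'},\mc L_j]$, a pure-commutator sum. The main obstacle will be to propagate these cancellations all the way to the Pauli 1-norm, most plausibly by rewriting $\mc W(\tau)=e^{\Omega(\tau)-A}$ with $A=\sum_j\mc L_j\tau/M$ and $\Omega$ the Magnus exponent of $\overleftarrow{\prod}_je^{\mc L_j\tau/M}$, then absorbing $\Omega-A$ together with BCH cross-terms into a single exponent of nested commutator blocks and bounding each block in Pauli 1-norm. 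Once this resummation establishes that $\mu(\mc W(\tau))-1=O((\|\mc L\|_\p\tau)^3)$, exponentiation gives the stated bound and the remainder of the argument reduces to exactly the LCS-formula assembly used in Proposition~\ref{prop:trotter}.
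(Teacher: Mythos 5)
Your overall route---Dyson/Taylor expansion of $\mc W(\tau)$ in powers of $\tau$, symmetrization into the Pauli-conjugate basis via Eq.~\eqref{eq:HP_decomp}, the term-by-term bound $\mu(\mc W_k)\le(2\|\mc L\|_\p)^k/k!$ after the $M^{-k}$ cancellation, and the resulting exponential tail bound for $\epsilon$---is exactly the paper's, and those parts, together with the one-ancilla, $\order{K_W}$-gate implementation, are fine.

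The gap is the 1-norm. The paper obtains the cubic exponent $\mathrm{e}^{\mathrm{e}(2\|\mc L\|_\p\tau)^3}$ by asserting $\mc W_1=\mc W_2=0$ and therefore starting the sum $1+\sum_{k}(2\|\mc L\|_\p\tau)^k/k!$ at $k=3$; no Magnus or BCH resummation appears anywhere in its argument. Your own (correct) second-order computation, $\mc W_2=\tfrac{1}{2M^2}\sum_{j<j'}[\mc L_{j'},\mc L_j]$, shows that this vanishing fails whenever the slice Lindbladians do not commute, so you cannot simply import that step, and the Magnus/BCH repair you sketch is left unexecuted. Worse, it is doubtful that the repair can deliver the stated bound at all: the Pauli 1-norm of $\tau^2\mc W_2$ is generically $\Theta\left((\|\mc L\|_\p\tau)^2\right)$, and the only mechanism that would push it to cubic order is the smallness of $\mc L_{j'}-\mc L_j$, i.e.\ a bound of the form $\abs{\chi^{(\mc L_{j'})}-\chi^{(\mc L_j)}}\lesssim\|\dot{\mc L}\|\,\abs{j'-j}\tau/M$ at the level of the process matrices---a derivative assumption that does not appear in the proposition and would change the stated constant. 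So as written your proposal does not establish $\mu\le\mathrm{e}^{\mathrm{e}(2\|\mc L\|_\p\tau)^3}$; your calculation is, in effect, evidence that obtaining the cubic exponent requires either commuting slices or an additional hypothesis, whereas only the weaker $\mu\le\mathrm{e}^{\order{(\|\mc L\|_\p\tau)^2}}$ follows from what you have (which would still suffice for the $\mu=\order{1}$ conclusion in Theorem~\ref{thm:time-dep} under the choice $\tau=\Theta(1/(\|\mc L\|_\p^2T))$).
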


\begin{proof}
It can be verified that $\mc W_0=\mc I$ and $\mc W_1=\mc W_2=0$ in Eq.~\eqref{eq:W_def_app}.
Thus, we only decompose the higher-order component $\mc W_{k}$ as
\begin{align}
   \mc W_{k}\coloneqq&\sum_{s_1+s_2=k}\frac{(-1)^{s_2}}{M^{k}s_1!s_2!}\mc T\left(\overrightarrow{\prod_{i=1}^{s_1}}\sum_{j_i=1}^M\mc L_{j_i}\right)\circ\left(\sum_{j=1}^M\mc L_j\right)^{\circ s_2}\notag\\
   =&\sum_{s_1+s_2=k}\frac{(-1)^{s_2}}{M^{k}s_1!s_2!}\mc T\left(\sum_{j=1}^M\sum_{\alpha,\beta}\chi^{(\mc L_{j})}_{\alpha\beta}P_\alpha(\cdot)P_\beta^\dag\right)^{\circ s_1}\circ\left(\sum_{j=1}^M\sum_{\alpha,\beta}\chi^{(\mc L_j)}_{\alpha\beta}P_\alpha(\cdot)P_\beta^\dag\right)^{\circ s_2}\notag\\
   =&\sum_{s_1+s_2=k}\frac{(-1)^{s_2}}{M^{k}s_1!s_2!}\mc T\left(\sum_{j=1}^M\abs{\chi^{(\mc L_j)}}\sum_{\alpha,\beta}\frac{\chi^{(\mc L_{j})}_{\alpha\beta}}{\abs{\chi^{(\mc L_j)}}}P_\alpha(\cdot)P_\beta^\dag\right)^{\circ s_1}\circ\left(\sum_{j=1}^M\abs{\chi^{(\mc L_j)}}\sum_{\alpha,\beta}\frac{\chi^{(\mc L_{j})}_{\alpha\beta}}{\abs{\chi^{(\mc L_j)}}}P_\alpha(\cdot)P_\beta^\dag\right)^{\circ s_2}\notag\\
   =&\sum_{s_1+s_2=k}\frac{(-1)^{s_2}\left(\sum_{j=1}^M\abs{\chi^{(\mc L_j)}}\right)^{k}}{M^{k}s_1!s_2!}\sum_{\Vec{j}\in[M]^k}\Pr(\Vec{j}\,|\,s_1,s_2,k)\mc T\left(\overrightarrow{\prod_{i=1}^{s_1}}\sum_{\alpha,\beta}\frac{\chi^{(\mc L_{j_i})}_{\alpha\beta}}{\abs{\chi^{(\mc L_{j_i})}}}P_\alpha(\cdot)P_\beta^\dag\right)\circ\left(\overrightarrow{\prod_{i=s_1+1}^{k}}\sum_{\alpha,\beta}\frac{\chi^{(\mc L_{j_i})}_{\alpha\beta}}{\abs{\chi^{(\mc L_{j_i})}}}P_\alpha(\cdot)P_\beta^\dag\right)\notag\\
   =&\mu(\mc W_k)\sum_{\substack{s_1+s_2=k\\\Vec{j}\in[M]^k}}\Pr(\Vec{j},s_1,s_2\,|\,k)(-1)^{s_2}\sum_{\Vec{\alpha},\Vec{\beta}}\left(\prod_{i=1}^{s_1}\frac{\chi^{(\mc L_{j_{\mc T^{-1}(i)}})}_{\alpha_i\beta_i}}{\abs{\chi^{(\mc L_{j_{\mc T^{-1}(i)}})}}}\right)\left(\prod_{i=s_1+1}^{k}\sum_{\alpha,\beta}\frac{\chi^{(\mc L_{j_i})}_{\alpha_i\beta_i}}{\abs{\chi^{(\mc L_{j_i})}}}\right)P_{\Vec{\alpha}}(\cdot)P_{\Vec{\beta}}^\dag\notag\\
   =&\mu(\mc W_k)\sum_{\substack{s_1+s_2=k\\\Vec{j}\in[M]^k}}\Pr(\Vec{j},s_1,s_2\,|\,k)\sum_{\Vec{\alpha},\Vec{\beta}}\left(\prod_{i=1}^{s_1}\Pr(\alpha_i,\beta_i\,|\,\mc L_{j_{\mc T^{-1}(i)}})\right)\left(\prod_{i=s_1+1}^{k}\sum_{\alpha,\beta}\Pr(\alpha_i,\beta_i\,|\,\mc L_{j_i})\right)\mc V_{(\Vec{j},s_1,s_2),\Vec{\alpha},\Vec{\beta}},
\end{align}
where we similarly use $\Vec{j}\coloneqq(j_1,\cdots,j_k)$, $\Vec{\alpha}\coloneqq(\alpha_{1},\cdots,\alpha_{k})$, $P_{\Vec{\alpha}}\coloneqq P_{\alpha_{1}}\cdots P_{\alpha_{k}}$, $\Vec{\beta}\coloneqq(\beta_{1},\cdots,\beta_{k})$, and $P_{\Vec{\beta}}\coloneqq P_{\beta_{1}}\cdots P_{\beta_{k}}$.
The $\mc T$ here is a permutation of the labels of $\{j_i\}$ to ensure the time ordering.
The map $\mc V_{(\Vec{j},s_1,s_2),\Vec{\alpha},\Vec{\beta}}$ consists of Pauli conjugates from $P_{\Vec{\alpha}}$ and $P_{\Vec{\beta}}$ with phase from all corresponding $\chi$ matrices and the sign of $(-1)^{s_2}$.
The normalization satisfies $\mu(\mc W_k)\coloneqq\frac{(2\sum_{j=1}^M\abs{\chi^{(\mc L_j)}})^k}{M^kk!}$.
The probability of Pauli labels satisfies $\Pr(\alpha,\beta\,|\,\mc L_{j})\coloneqq\frac{\abs{\chi^{(\mc L_{j})}_{\alpha\beta}}}{\abs{\chi^{(\mc L_{j})}}}$.
The probability of all other parameters can be determined independently from multiple distributions
\begin{align}
    \Pr(\Vec{j},s_1,s_2\,|\,k)\coloneqq\Pr(s_1,s_2\,|\,k)\Pr(\Vec{j}\,|\,s_1,s_2,k)=\frac{k!}{2^ks_1!s_2!}\prod_{i=1}^k\frac{\abs{\chi^{(\mc L_{j_i})}}}{\sum_{j=1}^M\abs{\chi^{(\mc L_j)}}}.
\end{align}

Truncate the overall compensation $\mc W(\tau)$ to its $K$th order, and we can get an approximation as
\begin{align}\label{eq:W_LCS}
    \tilde{\mc W}(\tau)\coloneqq&\mc I+\sum_{k=3}^K\tau^k\mc W_k=\mu(\tilde{\mc W}(\tau))\left(\Pr(0)\mc I+\sum_{k=3}^K\Pr(k)\frac{\mc W_k}{\mu(\mc W_k)}\right).
\end{align}
Its normalization factor satisfies
\begin{align}
    \mu(\tilde{\mc W}(\tau))\coloneqq 1+\sum_{k=3}^K\frac{(2\sum_{j=1}^M\abs{\chi^{(\mc L_j)}}\tau)^k}{M^kk!}\leq\mathrm{e}^{2\sum_{j=1}^M\abs{\chi^{(\mc L_j)}}\tau/M}-\sum_{k=1}^2\frac{(2\sum_{j=1}^M\abs{\chi^{(\mc L_j)}}\tau)^k}{M^kk!}
    \leq\mathrm{e}^{\mathrm{e}(2\|\mc L\|\tau)^3},
\end{align}
where we used the maximum of $\chi$ matrix, $\|\mc L\|$.
Thus, the probabilities in Eq.~\eqref{eq:W_LCS} satisfy $\Pr(0)=1/\mu(\tilde{\mc W}(\tau))$ and $\Pr(k)=\mu(\mc W_k)/\mu(\tilde{\mc W}(\tau))$ for $k\geq3$.

The error comes from the truncation and can be bounded by
\begin{align}
    \|\tilde{\mc W}(\tau)-\mc W(\tau)\|_\diamond\leq\sum_{k=K+1}^\infty\tau^k\|\mc W_k\|_\diamond=\sum_{k=K+1}^\infty\frac{(2\sum_{j=1}^M\abs{\chi^{(\mc L_j)}}\tau)^k}{M^kk!}\leq\left(\frac{2\mathrm{e}\|\mc L\|_\p\tau}{K+1}\right)^{K+1}.
\end{align}

To implement this $\tilde{\mc W}(\tau)$, we can use the circuit in Sec.~\ref{sec:LCS}.
Given that $\tilde{\mc W}(\tau)$ is an LCS formula with the Pauli-conjugate basis, the circuit requires one ancilla qubit and two control-Pauli gates.
Moreover, the Pauli terms in the LCS can achieve the locality of $\order{K_W}$, which implies $\order{K_W}$ elementary gates. 

\end{proof}
With this LCS formula, the compensation $\mc W(\tau)$ can be easily implemented via the sampling.
Combining $\mc W(\tau)$ and $\mathrm{e}^{\sum_{j=1}^M\mc L_j\tau/M}$, we can get a good approximation of the time-dependent evolution from $t_0$ to $t_0+\tau$.
By repeating this process of each time step $\tau$ for $T/\tau$ times, we can simulate the overall time-dependent evolution.
We summarize the whole scheme in Algorithm~\ref{alg:simulation2}.

\begin{algorithm}[t]
  \caption{Time-dependent Simulation}\label{alg:simulation2}
  \KwIn{State $\rho_0$; Observable $O$; Time-dependent Lindbladian $\mc L(t)$; Time $T$ and step length $\tau$; Number of Slices $M$; Round number $N$; Truncated order $K$ }
  \KwOut{The estimation of $\Tr(\mc T\mathrm{e}^{\int_0^T\mc L(t)dt}[\rho_0]O)$}
  $\text{Ans}\leftarrow 0$\;
  \For{$n=1,\cdots,N$}{
    \For{$j=1,\cdots,T/\tau$}{
      Calculate $M$ Lindbladians $\{\mc L_{j}\}$ as in Eq.~\eqref{eq:slice} by setting $t_0=(j-1)\tau$\;
      Apply Alg.~\ref{alg:simulation} with truncation $K$ and time length $\tau$ of $\sum_{j=1}^M\mc L_j/M$ with 1-norm to be $\mu_{L,j}$\;
      Apply Alg.~\ref{alg:random_sample} for the $K$th-order truncated convex combination, $\tilde{\mc W}(\tau)$, in Eq.~\eqref{eq:W_LCS} with 1-norm to be $\mu_{W,j}$\;
    }
    Measure the final state $\rho$ on $O$\;
    $\text{Ans}\,+=(\prod_{j=1}^{T/\tau}\mu_{W,j}\mu_{L,j})\times\Tr(\rho O)$
  }
  \Return $\text{Ans}/N$ 
\end{algorithm}

\begin{theorem}\label{thm:time-dep}
    Given a continuous time-dependent Lindbladian $\mc L(t)$ with a bounded first-order derivative $\|\dot{\mc L}\|_\diamond$, a state $\rho_0$ and time $T$, execute Alg.~\ref{alg:simulation2} with $\tau=\Theta\left(\frac{1}{\|\mc L\|^2_\p T}\right)$, $K=\Theta\left(\frac{\ln(1/\varepsilon)}{\ln(\|\mc L\|_\p T)}\right)$, $M=\Theta\left(\frac{\|\dot{\mc L}\|_\diamond}{\varepsilon\|\mc L\|_\p^2}\right)$, and $N=\Theta\left(\frac{1}{\varepsilon^2}\log(1/\delta)\right)$ can output an estimation $\tilde{\langle O\rangle}$ satisfying
    \begin{gather}
        \abs{\tilde{\langle O\rangle}-\Tr(\mc T\mathrm{e}^{\int_{0}^T\mc L(t)dt}[\rho_0]O)}\leq\varepsilon \|O\|
    \end{gather}
    with probability $1-\delta$. 
    This algorithm requires at most two ancilla qubits at one time with a single-round gate count
    \begin{gather}
        \order{m\|\mc L\|_\p^2T^2\left(q+\frac{\ln(1/\varepsilon)}{\ln(\|\mc L\|_\p T)}\right)}.
    \end{gather}
\end{theorem}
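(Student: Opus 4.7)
The plan is to mirror the structure of the proof of Theorem~\ref{thm:time-indep}, while accounting for two additional sources of error specific to the time-dependent setting: the discretization of $\mc L(t)$ into the piecewise-constant $\mc L_M(t)$ and the Dyson-series compensation $\mc W(\tau)$. As in the time-independent case, I will control the per-step bias, propagate it over $T/\tau$ steps, and then bound the Monte-Carlo fluctuation via Hoeffding.

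First I would fix a single step $[t_0,t_0+\tau)$ and write the target evolution as
\begin{equation*}
\mc T\mathrm{e}^{\int_{t_0}^{t_0+\tau}\mc L(t)dt}
=\bigl[\mc T\mathrm{e}^{\int_{t_0}^{t_0+\tau}\mc L(t)dt}-\mc T\mathrm{e}^{\int_{t_0}^{t_0+\tau}\mc L_M(t)dt}\bigr]
+\mc W(\tau)\circ\mathrm{e}^{\sum_{j=1}^M\mc L_j\tau/M},
\end{equation*}
invoking Eq.~\eqref{eq:W_def_app}. The first bracket contributes $\order{\tau^2\|\dot{\mc L}\|_\diamond/M}$ by Eq.~\eqref{eq:discrete_error_app}. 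For the second, I apply Theorem~\ref{thm:time-indep} to implement $\mathrm{e}^{\sum_{j=1}^M\mc L_j\tau/M}$ (treating $\sum_j\mc L_j/M$ as an effective time-independent Lindbladian of norm at most $\|\mc L\|_\p$) with bias driven by the truncation order $K$, and I invoke Proposition~\ref{prop:time_dep} to realize $\tilde{\mc W}(\tau)$ as an LCS formula with $1$-norm $\mathrm{e}^{\order{(\|\mc L\|_\p\tau)^3}}$ and bias $\bigl(2\mathrm{e}\|\mc L\|_\p\tau/(K+1)\bigr)^{K+1}$.

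Next, I concatenate these two LCS formulas with Lemma~\ref{lm:concate}. Because $\mc T\mathrm{e}^{\int_{t_0}^{t_0+\tau}\mc L_M(t)dt}$ is CPTP, the per-step bias is the sum of the three contributions above (up to higher-order cross terms), giving
\begin{equation*}
\epsilon_\tau=\order{\tfrac{\tau^2\|\dot{\mc L}\|_\diamond}{M}}+\order{(\|\mc L\|_\p\tau)^{K}},
\end{equation*}
while the per-step $1$-norm remains $\mu_\tau=1+\order{(\|\mc L\|_\p\tau)^2}$ after absorbing the $\mc W$ contribution since $(\|\mc L\|_\p\tau)^3$ is lower order. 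Choosing $\tau=\Theta(1/(\|\mc L\|_\p^2 T))$ keeps the total $1$-norm $\mu=\mu_\tau^{T/\tau}=\order{1}$, choosing $M=\Theta(\|\dot{\mc L}\|_\diamond/(\varepsilon\|\mc L\|_\p^2))$ guarantees $T/\tau\cdot\order{\tau^2\|\dot{\mc L}\|_\diamond/M}=\order{\varepsilon}$, and choosing $K=\Theta(\log(1/\varepsilon)/\log(\|\mc L\|_\p T))$ as in Eq.~\eqref{eq:K} suppresses the truncation contribution to $\order{\varepsilon/2}$ over the full $T/\tau$ steps (using that the CPTP target lets the single-step biases add rather than compound geometrically in $\mu$).

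Finally, I bound the sampling error: each round outputs a value in $[-\mu\|O\|,\mu\|O\|]$, so Hoeffding with $N=\Theta(\log(1/\delta)/\varepsilon^2)$ independent rounds gives statistical deviation at most $\varepsilon\|O\|/2$ with probability $1-\delta$, and the triangle inequality combines this with the bias to yield the claimed $\varepsilon\|O\|$ accuracy. The gate count per round is the sum of the time-independent cost from Theorem~\ref{thm:time-indep} applied $T/\tau$ times, $\order{m\|\mc L\|_\p^2T^2(q+K)}$, plus the $\order{K}$ overhead from $\tilde{\mc W}(\tau)$ per step which is absorbed. The main obstacle I anticipate is carefully justifying that the $\order{(\|\mc L\|_\p\tau)^3}$ blow-up of $\mu(\tilde{\mc W}(\tau))$ does not spoil $\mu=\order{1}$ when multiplied over $T/\tau=\order{\|\mc L\|_\p^2 T^2}$ steps; this requires $(\|\mc L\|_\p\tau)^3\cdot T/\tau=\order{1/(\|\mc L\|_\p T)}$, which the chosen $\tau$ satisfies, but verifying that the additional per-step overheads in $\tilde{\mc J}_l$ and $\tilde{\mc F}$ coming from implementing $\sum_j\mc L_j/M$ likewise stay within bounds requires some bookkeeping.
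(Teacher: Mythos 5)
Your proposal is correct and follows essentially the same route as the paper's proof: the per-step error is split into the discretization error of Eq.~\eqref{eq:discrete_error_app}, the truncation bias of $\tilde{\mc W}(\tau)$ from Proposition~\ref{prop:time_dep}, and the time-independent simulation error of $\mathrm{e}^{\sum_j\mc L_j\tau/M}$, then propagated over $T/\tau$ steps via Lemma~\ref{lm:concate} with the same parameter choices and a Hoeffding bound for the sampling error. The bookkeeping concern you flag at the end (that the $\order{(\|\mc L\|_\p\tau)^3}$ contribution from $\mu(\tilde{\mc W}(\tau))$ is dominated by the $\order{(\|\mc L\|_\p\tau)^2}$ terms and keeps $\mu=\order{1}$) is exactly how the paper resolves it.
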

\begin{proof}
    According to Theorem~\ref{thm:time-dep} as well as Eq.~\eqref{eq:W_def_app}, we know that each step in Alg.~\ref{alg:simulation2} sequentially implements LCS formulas, which approximate the short-time time-dependent evolutions.
    Note that the target Lindbladian is time-dependent, so operations in Algorithm~\ref{alg:simulation2} must be various for different time steps.
    To avoid the cluttering of notations, we ignore the unnecessary labels inferring the step number in the following derivations.
    
    Without loss of generality, we consider a fixed $j$th step.
    First, we analyze the parameters of this short-time LCS formula.
    According to Lemma~\ref{lm:concate}, the normalization can be calculated by direct multiplication.
    \begin{align}
        \mu_j\coloneqq&\mu_{W,j}\mu_{L,j}=\mu_{W,j}\cdot\mu(\tilde{\mc M}(\tau))\left(\prod_{l=1}^m\mu(\tilde{\mc N}_{l}(\tau))\mu(\tilde{\mc J}_{l}(\tau))\right)\mu(\tilde{\mc F}(\tau))\notag\\
        \leq&\exp(\mathrm{e}(2\|\mc L\|\tau)^3+16(\|H\|_1+\sum_{l=1}^m\|D_l\|_1^2)^2\tau^2+\sum_{l=1}^m(2a_2\|D_l\|_1^4\tau^2+2c(16\|D_l\|^2_1\tau)^{2k+1})+2c(2\|H\|_1\tau)^{4k+2})\notag\\
        \leq&\exp(\order{\|\mc L\|^2\tau^2}),
    \end{align}
    where we used normalization in Propositio~ \ref{prop:N_LCS}, \ref{prop:trotter}, and Corollary~\ref{co:F} and~\ref{co:J}.
    We have also assumed that the step length is as small as $\Theta\left(\frac{1}{\|\mc L\|^2T}\right)$.
    The bias between this $j$th-step LCS formula and the ideal evolution from $(j-1)\tau$ to $j\tau$ is 
    \begin{align}
        \epsilon_j\coloneqq&\left\|\mc T\mathrm{e}^{\int_{(j-1)\tau}^{j\tau}\mc L(s)ds}-\tilde{\mc W}(\tau)\circ\tilde{\mc M}(\tau)\left(\prod_{l=1}^m\tilde{\mc N}_{l}(\tau)\tilde{\mc J}_{l}(\tau)\right)\tilde{\mc F}(\tau)\right\|_\diamond\notag\\
        \leq&\|\mc T\mathrm{e}^{\int_{(t-1)\tau}^{t\tau}\mc L(s)ds}-\mc T\mathrm{e}^{\int_{(t-1)\tau}^{t\tau}\mc L_M(s)ds}\|_\diamond+\|\mc T\mathrm{e}^{\int_{(t-1)\tau}^{t\tau}\mc L_M(s)ds}-\tilde{\mc W}(\tau)\circ\mathrm{e}^{\sum_{j=1}^M\mc L_j\tau/M}\|_\diamond\notag\\
        &+\left\|\tilde{\mc W}(\tau)\circ\mathrm{e}^{\sum_{j=1}^M\mc L_j\tau/M}-\tilde{\mc W}(\tau)\circ\tilde{\mc M}(\tau)\left(\prod_{l=1}^m\tilde{\mc N}_{l}(\tau)\tilde{\mc J}_{l}(\tau)\right)\tilde{\mc F}(\tau)\right\|_\diamond\notag\\
        \leq&\order{\frac{\tau^2\|\dot{\mc L}\|_\diamond}{M}}+\left(\frac{2\mathrm{e}\|\mc L\|_\p\tau}{K+1}\right)^{K+1}+\order{\|\mc L\|_\p^K\tau^K}=\order{\frac{\tau^2\|\dot{\mc L}\|_\diamond}{M}+\|\mc L\|_\p^K\tau^K},
    \end{align}
    where errors are from Eq.~\eqref{eq:discrete_error_app}, Proposition~\ref{prop:time_dep}, and Eq.~\eqref{eq:step_error}, respectively.

    By concatenating all $T/\tau$ steps, we can find the overall normalization as
    \begin{align}
        \mu=\prod_{t=j}^{T/\tau}\mu_j\leq\exp(\order{\|\mc L\|_\p^2\tau T})=\order{1}.
    \end{align}
    The overall bias, as can be calculated from Lemma~\ref{lm:concate} given that each step runs an LCS formula for a CPTP map, is bounded by
    \begin{align}
        \epsilon_T\leq\prod_{j=1}^{T/\tau}(1+\epsilon_j)-1\leq\order{\frac{\tau T\|\dot{\mc L}\|_\diamond}{M}+\|\mc L\|_\p^K\tau^{K-1}T}.
    \end{align}
    Choosing proper $\tau=\Theta\left(\frac{1}{\|\mc L\|_\p^2T}\right)$ and $M=\Theta\left(\frac{\|\dot{\mc L}\|_\diamond}{\varepsilon\|\mc L\|_\p^2}\right)$, we can make $\epsilon_T\leq\varepsilon/2$.

    The bias analyzed above can be regarded as the distance between the expected output, $\Bar{\langle O\rangle}$, from the LCS formula and the desired value. 
    Nevertheless, we still need to account for the statistical fluctuation.
    The single-shot measured value $\Tr(\rho O)$ is in the range of $[-\mu\|O\|,\mu\|O\|]$ and is sampled for $N$ rounds independently.
    Therefore, we can bound the sampling fluctuation in our average estimation $\tilde{\langle O\rangle}$ based on the Hoeffding bound as
    \begin{align}
        \Pr(\abs{\tilde{\langle O\rangle}-\Bar{\langle O\rangle}}\geq\epsilon_s\|O\|)\leq2\exp(-\frac{N\epsilon^2_s}{2\mu^2}).
    \end{align}
    By setting $N=8\mu^2\log(2/\delta)/\varepsilon^2=\order{\log(1/\delta)/\varepsilon^2}$ and $\epsilon_s=\varepsilon/2$, this guarantees the overall error to be
    \begin{gather}
        \abs{\tilde{\langle O\rangle}-\Tr(\mc T\mathrm{e}^{\int_{0}^T\mc L(t)dt}[\rho_0]O)}\leq\abs{\tilde{\langle O\rangle}-\Bar{\langle O\rangle}}+\abs{\Bar{\langle O\rangle}-\Tr(\mc T\mathrm{e}^{\int_{0}^T\mc L(t)dt}[\rho_0]O)}\leq(\epsilon_s+\epsilon_T)\|O\|\leq \varepsilon\|O\|,
    \end{gather}
    with probability $1-\delta$.

    As for the gate count, we must notice that the whole algorithm implements the short-time time-independent simulation as well as an additional LCS formula for $T/\tau$ steps.
    As analyzed in Theorem~\ref{thm:time-indep}, the first part costs $\order{m(q+K)}$ elementary gates and uses at most two ancilla-qubits.
    The second part can be simply realized according to Fig.~\ref{fig:LCS} with $\order{K}$ gates.
    Given the overall $T/\tau$ steps, here comes the overall gate count as
    \begin{gather}
        \order{m\|\mc L\|_\p^2T^2\left(q+\frac{\ln(1/\varepsilon)}{\ln(\|\mc L\|_\p T)}\right)}.
    \end{gather}
\end{proof}

\section{Qubit Reuse in LCS Realization}\label{sec:reuse}
The implementation of unitary-conjugate operations requires an ancilla qubit with several controlled gates, as illustrated in Fig.~\ref{fig:concept}(c).
When concatenating two arbitrary unitary-conjugate-based formulas, the straightforward approach demands two ancilla qubits (Fig.~\ref{fig:reuse}(a)).
While one might consider measuring and re-preparing the ancilla qubit between operations, such frequent measurements and state preparations are impractical for many experimental platforms.

\begin{figure}[t]
    \centering
    \includegraphics[width=\columnwidth]{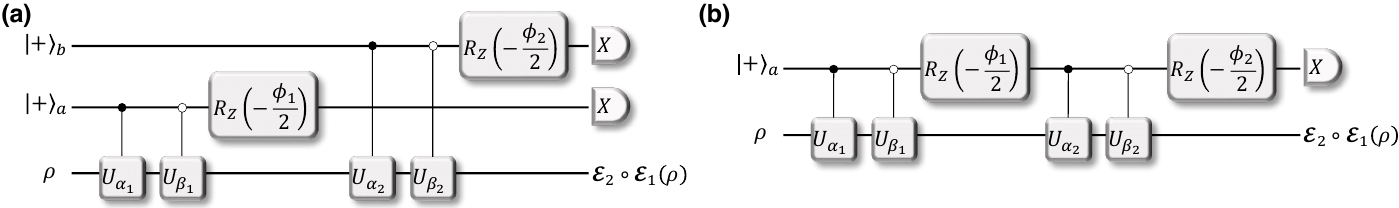}
    \caption{Two equivalent concatenations of the unitary-conjugate LCS formulas.
    (a) The trivial implementation.
    For multiple LCS formulas, this concatenation requires multiple ancilla qubits.
    (b) The concise implementation. 
    Here we check and derive this more concise concatenation using only one ancilla regardless of the number of LCS formulas.}
    \label{fig:reuse}
\end{figure}

We present an efficient approach for concatenating Hermitian-preserving formulas that enables ancilla qubit reuse, as shown in Fig.~\ref{fig:reuse}(b).
To validate this technique, consider the case of two LCS formulas:
\begin{gather}
\mc E_1\coloneqq\mu_1\sum_{\alpha_1,\beta_1}\Pr(\alpha_1,\beta_1|1)\cdot\mc V_{\phi_1,\alpha_1,\beta_1},\ 
\mc E_2\coloneqq\mu_2\sum_{\alpha_2,\beta_2}\Pr(\alpha_2,\beta_2|2)\cdot\mc V_{\phi_2,\alpha_2,\beta_2}.
\end{gather}
The equivalence between both concatenation methods can be verified using two key properties: $\phi(\alpha,\beta)=-\phi(\beta,\alpha)$ and $\Pr(\alpha,\beta)=\Pr(\beta,\alpha)$. 
By reordering the summation indices, we obtain:
\begin{gather*}
\sum_{\substack{\alpha_1,\beta_1\\\alpha_2,\beta_2}}\Pr(\alpha_2,\beta_2|2)\Pr(\alpha_1,\beta_1|1)\mc V_{\phi_2,\alpha_2,\beta_2}\circ\mc V_{\phi_1,\alpha_1,\beta_1}=\sum_{\substack{\alpha_1,\beta_1\\\alpha_2,\beta_2}}\Pr(\alpha_2,\beta_2|2)\Pr(\alpha_1,\beta_1|1)\mathrm{e}^{\ii(\phi_1(\alpha_1,\beta_1)+\phi_2(\alpha_2,\beta_2))}U_{\alpha_2}U_{\alpha_1}\rho U_{\beta_1}^\dag U_{\beta_2}^\dag\notag\\
=\frac{1}{2}\sum_{\substack{\alpha_1,\beta_1\\\alpha_2,\beta_2}}\Pr(\alpha_2,\beta_2|2)\Pr(\alpha_1,\beta_1|1)(\mathrm{e}^{\ii(\phi_1(\alpha_1,\beta_1)+\phi_2(\alpha_2,\beta_2))}U_{\alpha_2}U_{\alpha_1}\rho U_{\beta_1}^\dag U_{\beta_2}^\dag+\mathrm{e}^{-\ii(\phi_1(\alpha_1,\beta_1)+\phi_2(\alpha_2,\beta_2))}U_{\beta_2}U_{\beta_1}\rho U_{\alpha_1}^\dag U_{\alpha_2}^\dag).
\end{gather*}
This mathematical equivalence confirms that the qubit reuse technique shown in Fig.~\ref{fig:reuse}(b) produces the same effective output in expectation.
By induction, this reuse strategy generalizes to longer concatenations of operations, providing a resource-efficient implementation method of unitary-conjugate LCS formulas.

\section{Numerical Details}
Our numerical comparisons in the main text present specific elementary gate counts for different simulation methods.
Here, we elaborate on the compilation and counting procedures for both simulation schemes.
Our compilation focuses primarily on single-qubit rotation gates with arbitrary angles and CNOT gates, as these are the most resource-intensive elementary operations.
For comparison, we have adopted a 20-qubit transversed-field Ising model:
\begin{gather}
    H= J\sum_{\substack{i,j=1\\\langle i,j\rangle}}^{20} {Z_iZ_j}+h\sum_{j=1}^{20}{X_j}.
\end{gather}
The open-system dynamics are introduced through a dissipative term on the first qubit, given by $D=\sqrt{\gamma}\ket{0}\bra{1}$.

\begin{figure}
    \centering
    \includegraphics[width=\linewidth]{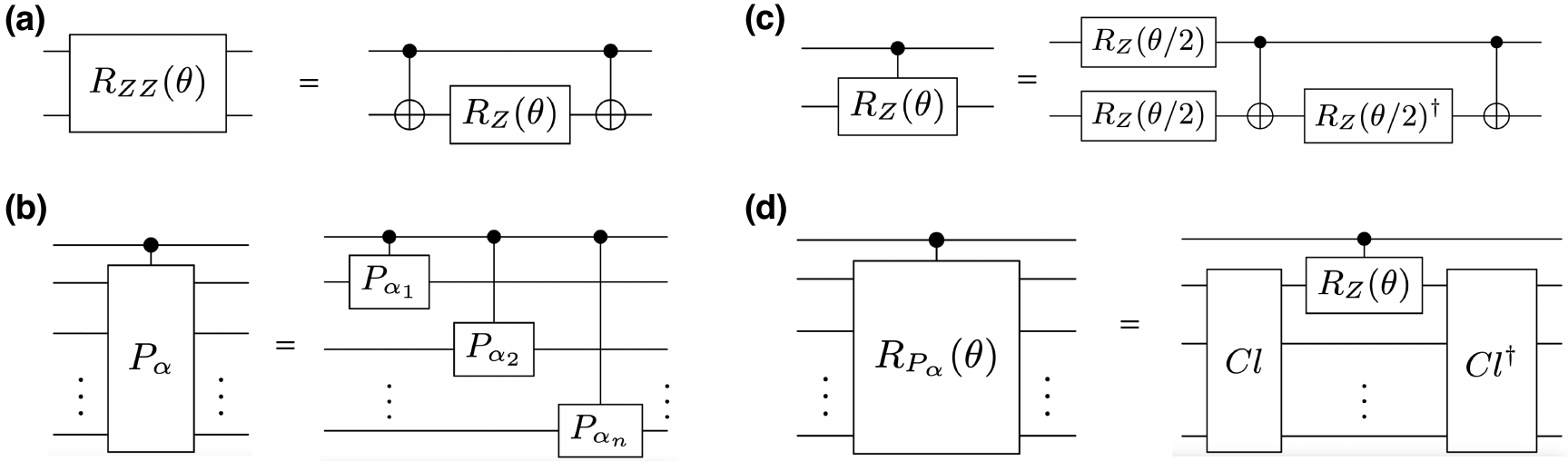}
    \caption{The compilation of typical gates into elementary gates. 
    (a) The compilation of $ZZ$ rotations.
    This can also be applied to arbitrary two-qubit rotation gates by adding single-qubit gates accordingly.
    (b) The compilation of control-Pauli gates.
    The weight of the Pauli operator determines the resulting gate count.
    (c) \& (d) The compilation of the control-rotation gates. 
    The Clifford gates in (d) cost CNOT gates of which the number equals to the weight of $P_\alpha$ as in~\cite{mansky2023decomposition}.}
    \label{fig:gate}
\end{figure}

In the Trotter-based simulation, the scheme is essentially implementing our coarse-grained stage.
For each time step, the method sequentially applies the first-order Trotter-Suzuki formula to both the inherent Hamiltonian $H$ and the joint Hamiltonian $J$ as a proxy of the dissipative evolution:
\begin{gather}
    J\coloneqq\left(\begin{array}{cccc}
0 & D^{\dagger} \\
D & 0 
\end{array}\right)=c(XX+YY).
\end{gather}
The first formula requires 20 single- and two-qubit rotation gates, while the second formula implements two two-qubit rotation gates.
As shown in Fig.~\ref{fig:gate}(a), this method requires 42 $R_Z$ gates and 44 CNOT gates per step.

In our two-stage method, the procedure not only encompasses the same Trotter step as the coarse-grained stage but also embeds LCS compensations.
Recall Fig.~\ref{fig:LCS}, we need two control-unitary gates along with one rotation gate for each LCS formula.
We then follow Eq.~\eqref{eq:Lind_decomp_app} to enumerate resource requirements.
The compensation for the Hamiltonian simulation requires control-rotation gates, and all other compensations ask for control-Pauli gates, which can all be compiled as in Fig.~\ref{fig:gate}.
Note that the weights of the Pauli operator involved are increasing with the order of the compensation $K$.
By calculation, we find that our method requires $48+16K$ CNOT gates and $51$ $R_Z$ gates.

\end{document}